\setlist[itemize]{%
labelsep=8pt,%
labelindent=0\parindent,%
itemindent=0pt,%
leftmargin=*,%
listparindent=-\leftmargin,%
nolistsep
}
\setlist[enumerate]{%
labelsep=8pt,%
labelindent=0\parindent,%
itemindent=0pt,%
leftmargin=*,%
listparindent=-\leftmargin,%
nolistsep 
}
\setlist[description]{%
labelsep=8pt,%
labelindent=0\parindent,%
itemindent=0pt,%
leftmargin=0pt,%
listparindent=\parindent%
}
\newlist{provisos}{enumerate}{1}     
\setlist[provisos,1]{
itemsep=0pt,
label*={\textbf{D\arabic*}},
ref={D}\arabic*,start=0,              
listparindent=-\leftmargin,%
itemindent=0pt,%
leftmargin=*,%
topsep=4pt,
}
\newlist{provisosp}{enumerate}{1}     
\setlist[provisosp,1]{
itemsep=0pt,
label*={\textbf{D\arabic*'}},
ref={D}\arabic*',start=2,              
listparindent=-\leftmargin,%
itemindent=0pt,%
leftmargin=*,%
topsep=4pt,
}
\newlist{lipton}{enumerate}{1}     
\setlist[lipton,1]{
itemsep=0pt,
label*={\textbf{L\arabic*}},
ref={L}\arabic*,
start=1,
listparindent=-\leftmargin,%
itemindent=0pt,%
leftmargin=*,%
topsep=4pt,
}
\algrenewcommand\algorithmicindent{1.0em}%
\renewcommand\paragraph[1]{\vspace{.0ex}\par\noindent\textbf{#1}.\;\;}
\newtheorem{proof}{Proof}
\let\phi=\varphi 
\let\epsilon=\varepsilon
\renewcommand\implies{\Rightarrow}
\newcommand*{\da@rightarrow}{\mathchar"0\hexnumber@\symAMSa 4B }
\newcommand*{\da@leftarrow}{\mathchar"0\hexnumber@\symAMSa 4C }
\newcommand*{\xdashrightarrow}[2][]{%
  \mathrel{%
    \mathpalette{\da@xarrow{#1}{#2}{}\da@rightarrow{\,}{}}{}%
  }%
}
\newcommand{\xdashleftarrow}[2][]{%
  \mathrel{%
    \mathpalette{\da@xarrow{#1}{#2}\da@leftarrow{}{}{\,}}{}%
  }%
}
\newcommand*{\da@xarrow}[7]{%
  \sbox0{$\ifx#7\scriptstyle\scriptscriptstyle\else\scriptstyle\fi#5#1#6\m@th$}%
  \sbox2{$\ifx#7\scriptstyle\scriptscriptstyle\else\scriptstyle\fi#5#2#6\m@th$}%
  \sbox4{$#7\dabar@\m@th$}%
  \dimen@=\wd0 %
  \ifdim\wd2 >\dimen@
    \dimen@=\wd2 %
  \fi
  \count@=2 %
  \def\da@bars{\dabar@\dabar@}%
  \@whiledim\count@\wd4<\dimen@\do{%
    \advance\count@\@ne
    \expandafter\def\expandafter\da@bars\expandafter{%
      \da@bars
      \dabar@ 
    }%
  }%
  \mathrel{#3}%
  \mathrel{%
    \mathop{\da@bars}\limits
    \ifx\\#1\\%
    \else
      _{\copy0}%
    \fi
    \ifx\\#2\\%
    \else
      ^{\copy2}%
    \fi
  }%
  \mathrel{#4}%
}
\newcommand\defmath[2]{\newcommand#1{\ensuremath{#2}\xspace}}
\newcommand\concept[1]{\textit{#1}}
\newcommand\ccode[1]{\texttt{#1}}
\newcommand{\overarrowi}[1]{\xrightarrow{#1}}
\newcommand{\overarrow}[1]{
  \mathchoice{\raisebox{-3pt}{ $\overarrowi{#1}$ }}
             {\raisebox{-3pt}{ $\overarrowi{#1}$ }}
             {\raisebox{-3pt}{ $\overarrowi{#1}$ }}
             {\raisebox{-3pt}{ $\overarrowi{#1}$ }}}
\providecommand{\tuple}[1]{\ensuremath{\left< #1 \right>}}
\providecommand{\set}[1]{\ensuremath{\left\lbrace #1 \right\rbrace}}
\providecommand{\sizeof}[1]{\ensuremath{\left\vert{#1}\right\vert}}
\defmath{\becomes}{\coloneqq}
\defmath{\true}{{\mathrm{true}}\xspace}
\defmath{\false}{{\mathrm{false}}\xspace}
\defmath{\tbool}{{\mathsf{Bool}}}
\defmath{\sortbool}{{\mathbb{B}}}
\defmath{\nat}{{\mathbb N}}
\defmath{\bool}{{\mathbb B}}
\defmath{\sortnat}{{\mathbb{N}}}
\renewcommand\to{\longrightarrow}
\providecommand{\Assign}[2]{{#1{~:=~}#2}}
\providecommand{\AssignState}[2]{\State \Assign{#1}{#2}}
\newcommand{\defn}{\,\triangleq\,}
\newcommand{\rrestr}[0]{\hspace{-.3mm}\mathrel{\reflectbox{\rotatebox[origin=tl]{-25}{$\|$}}}\hspace{-.3mm}}
\defmath{\lrestr}{\hspace{-.3mm}\mathrel{\reflectbox{\rotatebox[origin=tr]{25}{$\|$}}}\hspace{-.3mm}}
\defmath\Ta{\rightarrow}
\defmath\lmv{M^\shortleftarrow}
\defmath\rmv{M^\shortrightarrow}
\defmath\RR{\mathsf{Pre}}
\defmath\LL{\mathsf{Post}}
\defmath\NN{\mathsf{Ext}}
\defmath\CC{\mathcal{C}}
\defmath\II{\mathcal{I}}
\defmath\nRR{\overline{\RR}}
\defmath\nLL{\overline{\LL}}
\defmath\nNN{\overline{\NN}}
\defmath\nWW{\overline{\WW}}
\defmath\nEE{\overline{\EX}}
\defmath\trtrans{\hookrightarrow}
\defmath\brtrans{\leadsto}
\defmath\wrtrans{\rightsquigarrow}
\defmath\phases{H}
\defmath\tstates{S'}
\defmath\tsint{\sigma_0'}
\defmath\ttrans{T'}
\defmath\vcs{VCS}
\defmath\OO{\textsf{Poxt}}
\defmath\cts{\textsc{ts}}
\defmath\brcts{\mathrel{\hspace{.8mm}\ooalign{%
  \raisebox{1.3\height}{\hspace{-.8mm}$\brtrans$}\cr\hidewidth\cts\hidewidth\cr}}}
\defmath\procs{P}
\defmath\states{S}
\defmath\sint{\sigma_0}
\defmath\trans{T}
\defmath\reach{{\mathcal R}}
\def\red{\widetilde}
\defmath\actions{A}
\defmath\guardof{{G}}
\defmath\guards{\mathcal G}
\newcommand\tr[2]{\ensuremath{\stackrel{#1\phantom{x}}{\longrightarrow_{#2}}}}
\defmath\nes{\mathit{nes}}
\defmath\comm{\stackrel\leftrightarrow\bowtie}
\defmath\lcomm{\stackrel\leftarrow\bowtie}
\defmath\rcomm{\stackrel\rightarrow\bowtie}
\defmath\por{\mathit{por}}
\defmath\en{\mathit{en}}
\defmath\dis{\overline{\mathit{en}}}
\defmath\st{\mathit{st}}
\defmath\sst{\mathit{sst}}
\defmath\fst{\mathit{fst}}
\defmath\portrans{\xdashrightarrow{}}
\title{Stubborn Transaction Reduction (with Proofs)}
\begin{document}

\author{Alfons Laarman}
\institute{Leiden University, Leiden, The Netherlands, \email{a.w.laarman@liacs.leidenuniv.nl}\footnote{
\scriptsize This work is partially
supported by the Austrian National Research Network S11403-N23 (RiSE)
of the Austrian Science Fund (FWF) and by the Vienna Science and
Technology Fund (WWTF) through grant VRG11-005.
}}


\maketitle

\begin{abstract}
\smaller
The exponential explosion of parallel interleavings
remains a fundamental challenge to model checking of concurrent programs.
Both partial-order reduction (POR) and transaction reduction (TR) decrease the number of interleavings
in a concurrent system. Unlike POR, transactions also reduce the number of intermediate states.
Modern POR techniques, on the other hand, offer more dynamic ways of identifying commutative behavior,
a crucial task for obtaining good reductions.

We show that transaction reduction can use the same dynamic commutativity as found in
stubborn set POR. We also compare reductions obtained by POR and TR, demonstrating with
several examples that these techniques complement each other.

With an implementation of the dynamic transactions in the model checker LTSmin,
we compare its effectiveness with the original static TR
and two POR approaches.
Several inputs, including realistic case studies, demonstrate
that the new dynamic TR can surpass POR in practice.




\end{abstract}




\section{Introduction}
\label{sec:introduction}

POR~\cite{katz-peled,parle89,godefroid} yields state space reductions by selecting a subset $P_\sigma$
of the enabled actions $E_\sigma$ at each state $\sigma$;
the other enabled actions $E_\sigma \setminus P_\sigma$ are pruned.
For instance, reductions preserving deadlocks
(states without outgoing transitions) can be obtained
by ensuring the following properties for the set
$P_\sigma \subseteq E_\sigma \subseteq A$, where $A$ is the set of all actions:

\vspace{-1em}
\begin{wrapfigure}{r}{3.7cm}\vspace{-1.5em}
\begin{tikzpicture}[baseline={([yshift=-.5ex]current bounding box.center)},node distance=1cm]
  \node (s0) {\small $\sigma$};
  \node (s1) [right of=s0,gray] {\small $\sigma_1$};
  \node (s2) [right of=s1,gray] {\small $\sigma_{n-1}$};
  \node (s3) [right of=s2,gray] {\small $\sigma_n$};
  \path (s0) -- node[midway,gray]{\small$\tr{\beta_1}{}$} (s1)
             -- node[midway,gray]{\dots} (s2)
             -- node[midway,gray]{\small$\tr{\beta_n}{}$} (s3);
  
  \node (s0p) [below of=s0] {\small $\sigma'$};
  \node (s1p) [right of=s0p] {\small $\sigma_1'$};
  \node (s2p) [right of=s1p] {\small $\sigma_{n-1}'$};
  \node (s3p) [right of=s2p] {\small $\sigma_n'$};
  \path (s0p) -- node[midway]{\small$\tr{\beta_1}{}$} (s1p)
             -- node[midway]{\dots} (s2p)
             -- node[midway]{\small$\tr{\beta_n}{}$} (s3p);
             
  \path (s0) -- node[midway,sloped]{\small$\tr{\alpha}{}$} (s0p);
  \path (s1) -- node[midway,sloped,gray]{\small$\tr{\alpha}{}$} (s1p);
  \path (s2) -- node[midway,sloped,gray]{\small$\tr{\alpha}{}$} (s2p);
  \path (s3) -- node[midway,sloped,gray]{\small$\tr{\alpha}{}$} (s3p);
\end{tikzpicture}
\vspace{-2.5em}
\end{wrapfigure}~
\begin{itemize}
\item
In any state $\sigma_n$ reachable from $\sigma$ via pruned actions
$\beta_1,\dots,\beta_n \in A\setminus P_\sigma$, all actions
  $\alpha \in P_\sigma$ \concept{commute} with the pruned actions $\beta_1,\dots,\beta_n$
  and
\item at least one action $\alpha\in P_\sigma$ remains enabled in $\sigma_n$.
\end{itemize}
The first property ensures that the pruned actions $\beta_1,\dots,\beta_n$
are still enabled after $\alpha$ and lead to the same state ($\sigma_n'$),
i.e., the order of executing $\beta_1,\dots,\beta_n$ and $\alpha$ is irrelevant.
The second avoids that deadlocks are missed when
pruning states $\sigma_1,\ldots,\sigma_n$.
To compute the POR set $P_\sigma$ without computing pruned states $\sigma_1,\ldots,\sigma_n$
(which would defeat the purpose of the reduction it is trying to attain in the first place),
\emph{Stubborn POR uses static analysis to `predict' the future from $\sigma$, i.e.,
to over-estimate the $\sigma$-reachable actions $A\setminus P_\sigma$, e.g.: $\beta_1,..,\beta_n$.}

\begin{figure}[b]\vspace{-.5em}
\centering
\scalebox{.66}{

\begin{tikzpicture}[node distance=1.7cm,font=\large]
\newcommand\xstate[3]{(\tuple{#1,#2},#3)}
\renewcommand\xstate[3]{\tuple{#3}}

\tikzset{snake/.style={
decoration={snake, 
    amplitude = .4mm,
    segment length = 2mm,
    post length=0.9mm},decorate}}

\def\TONE{}
\def\TTWO{}

\tikzstyle{e}=[line width=2pt]

    \node (s0) 	 					{};
    \node (s0p) 	 					{};
    \node (s1) [below left	of=s0p] 	{};
    \node (s2) [below right of=s0p] 	{};
    \node (s3) [below left 	of=s1] 	{};
    \node (s4) [below right of=s1]	{};
    \node (s5) [below right of=s2] 	{};
    \node (s6) [below right	of=s3] 	{};
    \node (s7) [below right	of=s4] 	{};
    \node (s8) [below left	of=s7] 	{};

    \path (s0p) edge[e,->] node(t1')[midway,above,sloped] {\ccode{a=0;}}
    					node(t1')[midway,below,sloped] {$\TONE$}  (s1);
    \path (s0p) edge[->] node(t1')[midway,above,sloped] {\ccode{x=1;}}
    					node(t1')[midway,below,sloped] {$\TTWO$} (s2);
    \path (s1) edge[->] node(t1')[midway,above,sloped] {\ccode{b=2;}}
    					node(t1')[midway,below,sloped] {$\TONE$} (s3);
    \path (s1) edge[e,->] node(t1')[midway,above,sloped] {\ccode{x=1;}}
					    node(t1')[midway,below,sloped] {$\TTWO$} (s4);
    \path (s2) edge[->] node(t1')[midway,above,sloped] {\ccode{a=0;}}
					    node(t1')[midway,below,sloped] {$\TONE$} (s4);
    \path (s2) edge[->] node(t1')[midway,above,sloped] {\ccode{y=2;}}
					    node(t1')[midway,below,sloped] {$\TTWO$} (s5);
    \path (s4) edge[e,->] node(t1')[midway,above,sloped] {\ccode{b=2;}}
					    node(t1')[midway,below,sloped] {$\TONE$} (s6);
    \path (s4) edge[->] node(t1')[midway,above,sloped] {\ccode{y=2;}}
    					node(t1')[midway,below,sloped] {$\TTWO$} (s7);
    \path (s3) edge[->] node(t1')[midway,above,sloped] {\ccode{x=1;}}
					    node(t1')[midway,below,sloped] {$\TTWO$} (s6);
    \path (s5) edge[->] node(t1')[midway,above,sloped] {\ccode{a=0;}}
    					node(t1')[midway,below,sloped] {$\TONE$} (s7);
    \path (s7) edge[->] node(t1')[midway,above,sloped] {\ccode{b=2;}}
					    node(t1')[midway,below,sloped] {$\TONE$} (s8);
    \path (s6) edge[e,->] node(t1')[midway,above,sloped] {\ccode{y=2;}}
    					node(t1')[midway,below,sloped] {$\TTWO$} (s8);

    \path (s0p) edge[->,e,snake,bend right=45] node(n1)[midway,above,sloped] {\ccode{a=0;b=2;}} (s3);
    \path (s0p) edge[->,e,snake,bend left=45] node(n2)[midway,above,sloped] {\ccode{x=1;y=2;}} (s5);
    \path (s5) edge[->,e,snake,bend left=45] node(t1')[midway,below,sloped] {\ccode{a=0;b=2;}} (s8);
    \path (s3) edge[->,e,snake,bend right=45] node(t1')[midway,below,sloped] {\ccode{x=1;y=2;}} (s8);

    \node (s0x) 	[node distance=8.5cm,right of=s0p] {};
    \node (s0y) 	[node distance=.7cm,above of=s0x] {};
      \node (s1) [below left	of=s0x] 	{};
    \node (s2) [below right of=s0x] 	{};
    \node (s3) [below left 	of=s1] 	{};
    \node (s4) [below right of=s1]	{};
    \node (s5) [below right of=s2] 	{};
    \node (s6) [below right	of=s3] 	{};
    \node (s7) [below right	of=s4] 	{};
    \node (s8) [below left	of=s7] 	{};

    \path (s0x) edge[e,->] node(t1')[midway,above,sloped] {\ccode{a=0;}}
    					node(t1')[midway,below,sloped] {$\TONE$}  (s1);
    \path (s0x) edge[->] node(t1')[midway,above,sloped] {\ccode{x=1;}}
    					node(t1')[midway,below,sloped] {$\TTWO$} (s2);
    \path (s1) edge[->] node(t1')[midway,above,sloped] {\ccode{b=2;}}
    					node(t1')[midway,below,sloped] {$\TONE$} (s3);
    \path (s1) edge[e,->] node(t1')[midway,above,sloped] {\ccode{x=1;}}
					    node(t1')[midway,below,sloped] {$\TTWO$} (s4);
    \path (s2) edge[->] node(t1')[midway,above,sloped] {\ccode{a=0;}}
					    node(t1')[midway,below,sloped] {$\TONE$} (s4);
    \path (s2) edge[->] node(t1')[midway,above,sloped] {\ccode{y=2;}}
					    node(t1')[midway,below,sloped] {$\TTWO$} (s5);
    \path (s4) edge[e,->] node(t1')[midway,above,sloped] {\ccode{b=2;}}
					    node(t1')[midway,below,sloped] {$\TONE$} (s6);
    \path (s4) edge[->] node(t1')[midway,above,sloped] {\ccode{y=2;}}
    					node(t1')[midway,below,sloped] {$\TTWO$} (s7);
    \path (s3) edge[->] node(t1')[midway,above,sloped] {\ccode{x=1;}}
					    node(t1')[midway,below,sloped] {$\TTWO$} (s6);
    \path (s5) edge[->] node(t1')[midway,above,sloped] {\ccode{a=0;}}
    					node(t1')[midway,below,sloped] {$\TONE$} (s7);
    \path (s7) edge[->] node(t1')[midway,above,sloped] {\ccode{b=2;}}
					    node(t1')[midway,below,sloped] {$\TONE$} (s8);
    \path (s6) edge[e,->] node(t1')[midway,above,sloped] {\ccode{y=2;}}
    					node(t1')[midway,below,sloped] {$\TTWO$} (s8);


    \path (s0y) edge[e,->] node(t1')[midway,right] {\ccode{a=b=x=y=0;}} (s0x);
    
\end{tikzpicture}
}
\caption{Transition systems of $\mathit{program1}$ (left) and $\mathit{program2}$ (right).
Thick lines show optimal (Stubborn set) POR.
Curly lines show a TR (not drawn in the right figure). 
}
\label{f:lipton}\vspace{-.5em}
\end{figure}

Lipton or transaction reduction (TR)~\cite{lipton}, on the other hand, identifies
sequential blocks in the actions $\actions_i$ of each thread $i$
that can be grouped into transactions.
A transaction $\alpha_1..\alpha_k..\alpha_n \in \actions_i^*$
is replaced with an atomic action $\alpha$ which is its sequential composition, i.e.
$\alpha = \alpha_1\circ..\circ\alpha_k\circ..\circ\alpha_n$.
Consequently, any trace
$\sigma_1 \tr{\alpha_1}{} \sigma_2 \tr{\alpha_2}{}
\ldots\tr{\alpha_k}{}\ldots
\tr{\alpha_{n-1}}{} \sigma_n \tr{\alpha_n}{}\sigma_{n+1}$
is replaced by $\sigma_1 \tr{\alpha}{} \sigma_{n+1}$,
making state $\sigma_2, \ldots, \sigma_n$ \concept{internal}.
Thereby, internal states disallow all interleavings of other threads $j\neq i$,
i.e., \concept{remote actions} $\actions_j$ are not fired at these states.
Similar to POR, this pruning can reduce reachable states.
Additionally, internal states
can also be discarded when irrelevant to the model checking problem.

In the database terminology of origin~\cite{papadimitriou}, a transaction must consist of:
\begin{itemize}
\item A \concept{pre-phase},
		containing actions $\alpha_1..\alpha_{k-1}$ that may gather required resources,
\item a single \emph{commit action} $\alpha_k$
		possibly interfering with remote actions, and
\item a \concept{post-phase} $\alpha_{k+1}..\alpha_{n}$,
		possibly releasing resources (e.g. via unlocking them).
\end{itemize}
In the pre- and post-phase, the actions (of a thread $i$)
must commute with all remote behavior, i.e.
\emph{all actions $\actions_j$ of all other threads $j\neq i$ in the system}.

\emph{TR does not dynamically `predict' the possible future remote actions,
like POR does.}
This makes the commutativity requirement
needlessly stringent, as the following example shows:
Consider $\mathit{program1}$ consisting of two threads. All actions of one thread commute with
all actions of the other because only local variables are accessed.
\autoref{f:lipton} (left) shows the POR and TR of this system.
\[\footnotesize
\mathit{program1} := \ccode{\textbf{if} (fork()) \{a = 0; b = 2; \} \textbf{else} \{ x = 1; y = 2; \}}
\]\vspace{-2em}
\[\footnotesize
\mathit{program2} := \ccode{a = b = x = y = 0; \textbf{if} (fork()) \{ $\mathit{program1}$; \}}
\]
Now assume that a parallel assignment is added as initialization code
yielding $\mathit{program2}$ above.
\autoref{f:lipton} (right) shows again the reductions.
Suddenly, all actions of both threads become dependent on the initialization,
i.e. neither action \ccode{a = 0;} nor action \ccode{b = 2;} commute with
actions of other threads, spoiling the formation of a transaction
\ccode{\textsf{atomic}\{a = 0; b = 2;\}} (idem for \ccode{\textsf{atomic}\{x = 1; y = 2;\}}).
Therefore, TR does not yield any reduction anymore (not drawn).
Stubborn set POR~\cite{valmari1988error}, however, still reduces \textit{program2} like \textit{program1},
because, using static analysis, it `sees' that the initialization cannot be fired again.\footnote
{\scriptsize\textit{program2} is a simple example. Yet various programming patterns
lead to similar behavior, e.g.: lazy initialization,
atomic data structure updates and load balancing~\cite{vmcai}.}

In the current paper, we show how TR can be made dynamic in the same sense as
stubborn set POR~\cite{parle89},
so that the previous example again yields the maximal reduction.
Our work is based on the prequel~\cite{vmcai},
where we instrument programs 
in order to obtain dynamic TR for \emph{symbolic model checking}.
While \cite{vmcai} premiered dynamically growing and shrinking transactions,
its focus on symbolic model checking complicates a direct
comparison with other dynamic techniques such as POR.
The current paper therefore extends this technique to enumerative model checking,
which allows us to get rid of the heuristic conditions from~\cite{vmcai}
by replacing them with the more general stubborn set POR method.
While we can reduce the results in the current paper to the 
reduction theorem of~\cite{vmcai}, the new focus on enumerative model checking
provides opportunities to tailor reductions on a
per-state basis and investigate TR more thoroughly.\footnote{\label{fn:symbolic}Symbolic approaches can be viewed as reasoning over sets of states, and
therefore cannot easily support fine-grained per-state POR/TR analyses.}
This leads to various contributions:

\begin{enumerate}
\item A `Stubborn' TR algorithm (STR)
	more dynamic/general than TR in~\cite{vmcai}.
\item An open source implementation of (stubborn) TR in the model~checker~\textsc{LTSmin}.
\item Experiments comparing TR and POR for the first time in \autoref{sec:experiments}.
\end{enumerate}


Moreover, in \autoref{sec:comparison}, we show analytically that unlike stubborn POR:
\begin{enumerate}
\item Computing optimal stubborn TR is tractable and reduction is not heuristic.
\item Stubborn TR can exploit right-commutativity and prune (irrelevant) deadlocks
(while still preserving invariants as per \autoref{th:alg}).
\end{enumerate}
On the other hand, stubborn POR is still more effective for
checking for absence of deadlocks and reducing massively parallel systems.
Various open problems, including the combination of TR and POR,
leave room for improvement.

The current paper is the technical report version of \cite{laarman18}.
Proofs of theorems and lemmas can be found in \autoref{app:proofs}.


%


\vspace{-.5em}
\section{Preliminaries}
\label{sec:prelim}

\vspace{-.5em}

\paragraph{Concurrent transition systems}
We assume a general process-based semantic model that can
accommodate various languages.
A concurrent transition system (CTS) for a finite set of processes \procs is tuple
$\cts \defn \tuple{\states, \trans, \actions, \sigma_0}$
with finitely many actions $\actions \defn \biguplus_{i\in P} \actions_i$.
Transitions are relations between states and actions: $\trans\subseteq	\states\times\actions\times \states$.
We write $\alpha_i$ for $\alpha \in \actions_i$,
$\sigma \tr{\alpha}i \sigma'$ for $\tuple{\sigma,\alpha_i,\sigma'}\in T$,
$\trans_i$ for $\trans \cap (\states\times\actions_i\times \states)$,
$\trans_\alpha$ for $\trans \cap (\states\times\set{\alpha}\times \states)$,
$\tr{\alpha}{}$ for $\set{\tuple{\sigma,\sigma'}\mid \tuple{\sigma,\alpha,\sigma'}\in T}$, and
$\tr{}{i}$ for $\set{\tuple{\sigma,\sigma'}\mid \tuple{\sigma,\alpha,\sigma'}\in T_i}$.

State space exploration can be used to show invariance of a property $\varphi$, e.g.,
expressing mutual exclusion, written: $\reach(\cts) \models \varphi$.
This is done by finding all reachable states $\sigma$, i.e.,
$\reach(\cts)\defn \set{\sigma \mid \sigma_0 \rightarrow^* \sigma}$,
and show that $\sigma\in \varphi$.


\paragraph{Notation}
We let $\en(\sigma)$ be the set of actions enabled at $\sigma$:
\set{\alpha\mid \exists \tuple{\sigma,\alpha,\sigma'}\in T } and
$\overline{\en}(\sigma) \defn \actions \setminus \en(\sigma)$.
We let $R\circ Q$ and $RQ$ denote the \concept{sequential composition}
of two binary relations $R$ and $Q$, defined~as:
$
\{ (x,z)\,\vert\,\exists y\colon (x,y)\in R
\wedge (y,z)\in Q \}\,
$.
Let $R \subseteq S \times S$ and $X \subseteq S$. Then
left restriction of $R$ to $X$ is $X \lrestr R \defn R\,\cap\,(X \times S)$ and right restriction
is $R\rrestr X \defn R\,\cap\,(S \times X)$.
The complement of $X$ is denoted
$\overline X \defn S  \setminus X$ (the universe of all states remains implicit in this notation).
The inverse of $R$ is $R^{-1}\defn \set{\tuple{x,y}\mid \tuple{y,x}\in R}$.

\paragraph{POR relations}
Dependence is a well-known relation used in POR. 
Two actions $\alpha_1,\alpha_2$ are dependent if there is a state where they do not commute,
hence we first define commutativity.
Let $c \defn \set{{\sigma}\mid\exists  \tuple{\sigma,\alpha_1,\sigma'},\tuple{\sigma,\alpha_2,\sigma''}\in \trans}$. Now:
\begin{IEEEeqnarray}{rCrll}
\tr{\alpha_1}{} \comm \tr{\alpha_2}{}
		&\defn & \,\,c\lrestr \tr{\alpha_1}{} \circ \tr{\alpha_2}{}
							&~=~ c \lrestr \tr{\alpha_2}{} \circ \tr{\alpha_1}{}
							&\text{~($\alpha_1$, $\alpha_2$
							strongly-commute)} \nonumber\\*
\tr{\alpha_1}{} \bowtie \tr{\alpha_2}{}
		&\defn & \tr{\alpha_1}{} \circ \tr{\alpha_2}{}
						 	&~=~ \tr{\alpha_2}{} \circ \tr{\alpha_1}{}
							&\text{~($\alpha_1$, $\alpha_2$ 
							commute, also $\alpha_1 \bowtie \alpha_2$)} \nonumber\\*
\tr{\alpha_1}{} \rcomm \tr{\alpha_2}{}
		&\defn &\tr{\alpha_1}{} \circ \tr{\alpha_2}{}
							&~\subseteq~ \tr{\alpha_2}{} \circ \tr{\alpha_1}{}
							&\text{~($\alpha_1$ 
							right-commutes with $\alpha_2$)}\nonumber\\*
\tr{\alpha_1}{} \lcomm \tr{\alpha_2}{}
		&\defn &\tr{\alpha_1}{} \circ \tr{\alpha_2}{}
							&~\supseteq~ \tr{\alpha_2}{} \circ \tr{\alpha_1}{}
							&\text{~($\alpha_1$  
							left-commutes with $\alpha_2$)}\nonumber
\vspace{-3ex}
\end{IEEEeqnarray}
\begin{minipage}{.485\linewidth}
\begin{align}
\text{

\hspace{-.6cm}
\begin{tikzpicture}[baseline={([yshift=-.5ex]current bounding box.center)}]

  \node (s0) {\small $\sigma_1$};

  
  \node (s1) [node distance=1cm,below of=s0] {\small $\sigma_2$};
  \node (s2) [node distance=.1cm,right of=s1, xshift=.8cm] {\small $\sigma_3$};
  \path (s0) -- node(m)[midway,sloped]{\small$\tr{\alpha_1}{}$} (s1);
  \path (s1) -- node[midway]{\small$\tr{\alpha_2}{}$} (s2);

  \node [right of=s2,yshift=.55cm,xshift=-.5cm]{\small$\Rightarrow\exists \sigma_4:$};

  \node [left of=m,xshift=-.1cm]
   {\small $\forall \sigma_1,\sigma_2,\sigma_3:$};

  \node (s3) [node distance=.3cm,right of=s0, xshift=1.9cm]{\small $\sigma_1$};
  \node (s4) [node distance=.3cm,right of=s3, xshift=.6cm] {\small $\sigma_4$};
  \node (s5) [node distance=1cm,below of=s4] {\small $\sigma_3$};
  \path (s4) -- node[midway,sloped]{\small$\tr{\alpha_1}{}$} (s5);
  \path (s3) -- node[pos=.45]{\small$\tr{\alpha_2}{}$} (s4);

  \node (s6) [gray,node distance=1cm,below of=s3] {\small $\sigma_2$};
  \path (s3) -- node(m)[gray,midway,sloped]{\small$\tr{\alpha_1}{}$} (s6);
  \path (s6) -- node[gray,pos=.48]{\small$\tr{\alpha_2}{}$} (s5);

\end{tikzpicture}
\hspace{-.6cm}}\label{eq:right}
\end{align}
\end{minipage}
\begin{minipage}{.515\linewidth}
\begin{align}
\text{

\hspace{-.6cm}
\begin{tikzpicture}[baseline={([yshift=-.5ex]current bounding box.center)}]

  \node (s0)  {\small $\sigma_1$};

  
  \node (s1) [node distance=1cm,below of=s0] {\small $\sigma_2$};
  \node (s2) [node distance=.1cm,right of=s0, xshift=.8cm] {\small $\sigma_3$};
  \path (s0) -- node(m)[midway,sloped]{\small$\tr{\alpha_1}{}$} (s1);
  \path (s0) -- node[midway]{\small$\tr{\alpha_2}{}$} (s2);

  \node [left of=m,xshift=.4cm]
   {\hspace{-2em}\small $\forall \sigma_1,\sigma_2,\sigma_3\colon$};
  \node [right of=s2,yshift=-.55cm,xshift=-.3cm]{\small$\Rightarrow \exists \sigma_4:$};

  \node (s3) [gray,node distance=.6cm,right of=s0, xshift=1.7cm]{\small $\sigma_1$};
  \path (s3) -- node[gray,pos=.45]{\small$\tr{\alpha_2}{}$} (s4);
  \path (s3) -- node(m)[gray,midway,sloped]{\small$\tr{\alpha_1}{}$} (s6);

  \node (s4) [node distance=.3cm,right of=s3, xshift=.6cm] {\small $\sigma_3$};
  \node (s6) [node distance=1cm,below of=s3] {\small $\sigma_2$};
  \node (s5) [node distance=1cm,below of=s4] {\small $\sigma_4$};
  \path (s4) -- node[midway,sloped]{\small$\tr{\alpha_1}{}$} (s5);
  \path (s6) -- node[pos=.48]{\small$\tr{\alpha_2}{}$} (s5);

\end{tikzpicture}
\hspace{-.8cm}}\label{eq:strong}
\end{align}
\end{minipage}
Left / right commutativity allows actions to be
prioritized / delayed over other actions without affecting the end state.
\autoref{eq:right} illustrates this by quantifying of the states:
Action $\alpha_1$ right-commutes with $\alpha_2$, and vice verse
$\alpha_2$ left-commutes with $\alpha_1$.
Full commutativity ($\bowtie$) always allows both delay and prioritization for
any serial execution of $\alpha_1, \alpha_2$, while
strong commutativity only demands full commutativity when both actions are
simultaneously enabled, as shown in \autoref{eq:strong}
for deterministic actions $\alpha_1$/$\alpha_2$
(\autoref{eq:strong} is only for an intuition and does not illustrate
the non-deterministic case, which is covered by $\comm$).
Left / right / strong \concept{dependence} implies lack of left / right / strong commutativity, e.g.:
$\alpha_1 \not\bowtie \alpha_2$.

Note that typically: $\forall i, \alpha,\beta \in \actions_i \colon \alpha \not\bowtie \beta$
due to e.g. a shared program counter.
Also note that if $\alpha_1 \rcomm \alpha_2$, then $\alpha_1$ never enables $\alpha_2$,
while strong commutativity implies that neither $\alpha$ disables $\beta$, nor vice versa.

A lock(/unlock) operation right(/left)-commutes with other locks and unlocks.
Indeed, a lock never enables another lock or unlock.
Neither do unlocks ever disable other unlocks or locks.
In the absence of an unlock however, a lock also attains left-commutativity
as it is mutually disabled by other locks.
Because of the same disabling property, two locks however do not strongly commute.

Finally, a \concept{necessary enabling set} (NES) of 
an action $\alpha$ and a state $\sigma_1$
is a set of actions that must be executed for $\alpha$ to become enabled,~formally:\\
$\forall E\in \nes_{\sigma_1}(\alpha), \sigma_1\xrightarrow{\alpha_1,..,\alpha_n} \sigma_2\colon
\alpha \in \dis(\sigma_1) \land \alpha\in\en(\sigma_2) \implies E \cap \set{\alpha_1,..,\alpha_n}\neq\emptyset$.
An example of an action $\alpha$ with two NESs $E_1,E_2 \in \nes_\sigma(\alpha)$
is a command guarded by $g$ in an imperative language:
When $\alpha\in \dis(\sigma)$, then either its guard $g$ does not hold in $\sigma$,
and $E_1$ consists of all actions enabling $g$, or its program counter is not
activated in $\sigma$, and $E_2$ consists of all actions that label the edges 
immediately before $\alpha$ in the CFG of the process that $\alpha$ is part of.

\paragraph{POR}
POR uses the above relations to find a subset of enabled actions $\por(\sigma) \subseteq \en(\sigma)$
sufficient for preserving the property of interest.
Commutativity is used to ensure that 
the sets $\por(\sigma)$ and $\en(\sigma) \setminus \por(\sigma)$ commute, while
the NES is used to ensure that this mutual commutativity holds \emph{in all future behavior}.
The next section explains how stubborn set POR achieves this.

POR gives rise to a CTS $\red\cts \defn \langle \states, \red\trans, \actions, \sigma_0 \rangle$,
$\red T \defn \set{\tuple{\sigma,\alpha,\sigma'}\in T \mid \alpha\in\por(\sigma)}$,
abbreviated $\sigma \xdashrightarrow{\alpha} \sigma'$.
It is indeed reduced, since we have $\reach(\red\cts) \subseteq \reach(\cts)$.

\paragraph{Transaction reduction}
(Static) transaction reduction 
was devised by Lipton~\cite{lipton}.
It merges multiple sequential
statements into one atomic operation,
thereby radically reducing the reachable states.
An action $\alpha$ is called 
a right/left mover if and only if
it commutes with actions from all other threads $j\neq i$:
\vspace{-2mm}\[\tr{\alpha}{i} \,\,\rcomm\,\,  \bigcup_{j\neq i} \tr{}{j}
		\text{~($\alpha$ is a right mover)\phantom{XX}}
	 \tr{\alpha}{i} \,\,\lcomm\,\,  \bigcup_{j\neq i} \tr{}{j}
	 	 ~\text{~($\alpha$ is a left mover)} 
\]\vspace{-4mm}

\noindent
\concept{Both-movers} are transitions that are both left and right movers,
whereas \concept{non-movers} are neither.
The sequential composition of two movers is also a corresponding mover,
and vice versa.
Moreover, one may always safely classify an action as a non-mover, although
having more movers yields better reductions.

Examples of right-movers are locks, P-semaphores and synchronizing
queue operations. Their counterparts; unlock, V-semaphore and enqueue ops, are left-movers.
Their behavior is discussed above using locks and unlocks as an example.

Lipton reduction only preserves halting.
We present Lamport's~\cite{lamport-lipton} version, which preserves
safety properties such as $\Box \varphi$, i.e. $\varphi$ is an invariant.
Any sequence $\alpha_1,\ldots, \alpha_n$
can be
\concept{reduced} to a single action 
$\alpha$ s.t. $\tr{\alpha}{i} = \tr{\alpha_1}{i}\circ \ldots \circ\tr{\alpha_n}{i}$
(i.e. a compound statement with the same local behavior),
if for some $1 \le k < n$:
\begin{lipton}[parsep=0pt]
\item\label{L1}
	 actions before the commit $\alpha_k$ are right movers:
	$\tr{\alpha_1}{i}\circ \ldots\circ \tr{\alpha_{k-1}}{i} \,\,\rcomm\,\, \tr{}{\neq i}$,
\item\label{L2}
	 actions after the commit $\alpha_{k}$ are left movers:
	$\tr{\alpha_{k+1}}{i}\circ \ldots\circ \tr{\alpha_{n}}{i}  \,\,\lcomm\,\, \tr{}{\neq i}$,
\item\label{L3}
	 actions after $\alpha_1$ do not block:
	$\forall \sigma\, \exists \sigma' \colon
				\sigma\tr{\alpha_{1}}{i}\circ \ldots\circ \tr{\alpha_{n}}{i}\sigma'$, and
	\vphantom{$\lcomm\bigcup_{j\neq i}$}
\item\label{L4} 
	$\varphi$ is not disabled by $\tr{\alpha_1}{i}\circ \ldots\circ \tr{\alpha_{k-1}}{i}$, nor enabled by
	 $\tr{\alpha_{k+1}}{i}\circ \ldots\circ \tr{\alpha_{n}}{i}$.\parbox{0pt}{\vphantom{$\lcomm\bigcup_{j\neq i}$}}
\end{lipton}

\begin{wrapfigure}{r}{5cm}
\vspace{-2.5em}
\hspace{-1em}
\scalebox{.75}{
\begin{tikzpicture}

   \tikzstyle{e}=[minimum width=0cm]
   \tikzstyle{every node}=[font=\small, node distance=.9cm, inner sep=1pt]

	\node (s1) [e] {$\sigma_1$};
	\node (s2) [right of=s1,e] {$\sigma_2$};
	\node (s3) [right of=s2,e] {$\sigma_3$};
	\node (s4) [right of=s3,e] {$\sigma_4$};
	\node (s6) [right of=s4,e] {$\sigma_5$};
	\node (sx) [right of=s6,e] {$\sigma_6$};
	\node (s7) [right of=sx,e] {$\sigma_7$};
	\node (s8) [right of=s7,e] {$\sigma_8$};
	\path (s1.east) edge[->,gray] node[above,pos=.45,gray]{$\beta_1$} (s2.west)
		  (s2.east) edge[->] 		node[above,pos=.45](a){$\alpha_1$} (s3.west)
		  (s3.east) edge[->,gray] node[above,pos=.45,gray](e){$\beta_2$} (s4.west)
		  (s4.east) edge[->,gray] node[above,pos=.45,gray]{$\beta_3$} (s6.west)
		  (s6.east) edge[->] 		node[above,pos=.45]{$\alpha_2$} (sx.west)
		  (sx.east) edge[->,gray] node[above,pos=.45,gray]{$\beta_4$} (s7.west)
		  (s7.east) edge[->] 		node[above,pos=.45]{$\alpha_3$} (s8.west);

	\node (s1a) [node distance=.75cm,below of=s1,e] {$\sigma_1$};
	\node (s2a) [right of=s1a,e] {$\sigma_2$};
	\node (s3a) [right of=s2a,e] {$\sigma_3'$};
	\node (s4a) [right of=s3a,e] {$\sigma_4$};
	\node (s6a) [right of=s4a,e] {$\sigma_5$};
	\node (sxa) [right of=s6a,e] {$\sigma_6$};
	\node (s7a) [right of=sxa,e] {$\sigma_7$};
	\node (s8a) [right of=s7a,e] {$\sigma_8$};
	\path (s1a.east) edge[->,gray] node[above,gray]{$\beta_1$} (s2a.west)
		  (s2a.east) edge[->,gray] node[above,gray](ea){$\beta_2$} (s3a.west)
		  (s3a.east) edge[->] 		node[above](aa){\vphantom{$\beta$}$\alpha_1$} (s4a.west)
		  (s4a.east) edge[->,gray] node[above,gray](eea){$\beta_3$} (s6a.west)
		  (s6a.east) edge[->] 		node[above,pos=.45]{$\alpha_2$} (sxa.west)
		  (sxa.east) edge[->,gray] node[above,pos=.45,gray]{$\beta_4$} (s7a.west)
		  (s7a.east) edge[->] 		node[above,pos=.45]{$\alpha_3$} (s8a.west);

	\path  (a.south) edge[->,thick,dotted] node[pos=.45]{} (aa.north)
		   (e.south) edge[->,thick,dotted] node[pos=.45]{} (ea.north);

	\node (s1b) [node distance=.75cm,below of=s1a,e] {$\sigma_1$};
	\node (s2b) [right of=s1b,e] {$\sigma_2$};
	\node (s3b) [right of=s2b,e] {$\sigma_3'$};
	\node (s4b) [right of=s3b,e] {$\sigma_4'$};
	\node (s6b) [right of=s4b,e] {$\sigma_5$};
	\node (sxb) [right of=s6b,e] {$\sigma_6$};
	\node (s7b) [right of=sxb,e] {$\sigma_7$};
	\node (s8b) [right of=s7b,e] {$\sigma_8$};

	\path (s1b.east) edge[->,gray] node[above,gray]{$\beta_1$} (s2b.west)
		  (s2b.east) edge[->,gray] node[above,gray]{$\beta_2$} (s3b.west)
		  (s3b.east) edge[->,gray] node[above,gray](eb){$\beta_3$} (s4b.west)
		  (s4b.east) edge[->] node(ab)[above]{\vphantom{$\beta$}$\alpha_1$} (s6b.west)
		  (s6b.east) edge[->] 		node(b2)[above,pos=.45]{$\alpha_2$} (sxb.west)
		  (sxb.east) edge[->,gray] node(b1)[above,pos=.45,gray]{$\beta_4$} (s7b.west)
		  (s7b.east) edge[->] 		node(b2)[above,pos=.45]{$\alpha_3$} (s8b.west);

	\path (eea.south) edge[->,thick,dotted] node[pos=.45]{} (eb.north)
		  (aa.south) edge[->,thick,dotted] node[pos=.45]{} (ab.north);

	\node (s1c) [node distance=.75cm,below of=s1b,e] {$\sigma_1$};
	\node (s2c) [right of=s1c,e] {$\sigma_2$};
	\node (s3c) [right of=s2c,e] {$\sigma_3'$};
	\node (s4c) [right of=s3c,e] {$\sigma_4'$};
	\node (s7c) [node distance=.75cm,below of=s7b,e] {$\sigma_7'$};
	\node (s8c) [right of=s7c,e] {$\sigma_8$};

	\path (s1c.east) edge[->,gray] node[above,gray]{$\beta_1$} (s2c.west)
		  (s2c.east) edge[->,gray] node[above,gray]{$\beta_2$} (s3c.west)
		  (s3c.east) edge[->,gray] node[above,gray](eb){$\beta_3$} (s4c.west)
		  (s4c.east) edge[->] node(ab)[above,pos=.1]{$\alpha_1$} 
							node(c1)[above,pos=.3]{$\circ$}
							node(c1)[above,pos=.5]{$\alpha_2$}
							node(c1)[above,pos=.7]{$\circ$}
		  					node(c1)[above,pos=.9]{\vphantom{$\beta$}$\alpha_3$} (s7c.west)
		  (s7c.east) edge[->,gray] 		node(c2)[above,pos=.45,gray]{$\beta_4$} (s8c.west);

	\path (b1.south) edge[->,thick,dotted] node[pos=.45]{} (c2.north)
		  (b2.south) edge[->,thick,dotted] node[pos=.45]{} (c1.north);
\end{tikzpicture}
}\vspace{-2.5em}
\end{wrapfigure}
The example (right) shows the evolution of a trace when a
reduction with $n\hspace{-1mm}=\hspace{-1mm}3$, $k\hspace{-1mm}=\hspace{-1mm}2$ is applied.
Actions $\beta_1,\ldots,\beta_4$ are remote.
The pre-action $\alpha_1$ is first moved towards the commit action
$\alpha_2$. Then the same is done with the post-action $\alpha_3$.
\textbf{L1} resp. \textbf{L2} guarantee that the trace's end state $\sigma_8$
remains invariant, \textbf{L3} guarantees its existence and
\textbf{L4} guarantees that e.g.
$\sigma_4 \notin\varphi \implies
\sigma_3' \notin\varphi$ and
$\sigma_6
\notin\varphi \implies
 \sigma_7' \notin\varphi$
(preserving invariant violations $\neg\varphi$ in the reduced system without $\sigma_4$ and $\sigma_6$).
The subsequent section provides a dynamic variant of TR.


\newcommand\trp{\stackrel{}{\longrightarrow_{i}'}}
\defmath\phase{\mathit{h}}

\section{Stubborn Transaction Reduction}
\label{sec:str}

The current section gradually introduces stubborn transaction reduction.
First, we introduce a stubborn set definition that is parametrized with
different commutativity relations.
In order to have enough luggage to compare POR to TR in \autoref{sec:comparison},
we elaborate here on various aspects of stubborn POR and compare
our definitions to the original stubborn set definitions.
We then provide a definition for \concept{dynamic left and right movers}, based on
the  stubborn set parametrized with left and right commutativity.
Finally, we provide a definition of a \concept{transaction system},
show how it is reduced and provide an algorithm to do so.
This demonstrates that TR can be made dynamic in the same sense as stubborn
sets are dynamic. 
We focus in the current paper on the preservation of invariants.
But since deadlock preservation is an integral part of POR, it is addressed
as well.

\vspace{-.5em}
\subsection{Parametrized stubborn sets}\label{s:pss}
\vspace{-.5em}
We use stubborn sets as they have advantages compared to other
traditional POR techniques~\cite[Sec.~4]{intuition}.
We first focus on a basic definition of the stubborn set that
only preserves deadlocks.
The following version is parametrized (with~$\star$).

\begin{definition}[$\star$-stubborn sets]\label{def:ss}
Let $\star\in \set{\shortleftarrow, \shortrightarrow, \leftrightarrow}$.
A set $B \subseteq \actions$ is $\star$-stubborn in the state $\sigma$, written $\st^\star_\sigma(B)$, if:
\begin{provisos}
\item \label{i:d0}
			$\en(\sigma) \neq \emptyset \implies B\cap \en(\sigma) \neq \emptyset$
			\hfill (include an enabled action, if one exists)
\item \label{i:d2}
			$\forall \alpha \in B \cap\overline{\en}(\sigma)\colon
			\exists E\in\nes_\sigma(\alpha) \colon  E\subseteq B$
			\phantom{x}\hfill(for disabled $\alpha$ include a \concept{NES})
\item \label{i:d1}\vspace{-1mm}
			$\forall \alpha\in B \cap \en(\sigma), \beta \stackrel{\star}{\not\bowtie} \alpha \colon
			\beta\in B$
			\hfill(for enabled $\alpha$ include $\star$-dependent actions)
\end{provisos}
\end{definition}

Notice that a stubborn set $B$ includes actions disabled in $\sigma$
to reason over future behavior with \textbf{\ref{i:d2}}: 
Actions $\alpha\in B$ commute with $\beta\in \en(\sigma)\setminus B$ by~\textbf{\ref{i:d1}},
but also with $\beta'\in \en(\sigma')$ for $\sigma\tr{\beta}{}\sigma'$, since
\textbf{\ref{i:d2}} ensures that $\beta$ cannot enable any $\gamma\in B$ (ergo $\beta'\notin B$).
\autoref{th:valmari} formalizes this.
From $B$, 
the reduced system is obtained~by taking $\por(\sigma) \defn \en(\sigma)\cap B$:
It preserves deadlocks.
But not all $\star$-parametrizations lead to correct reductions w.r.t. deadlock preservation.
We therefore briefly relate our definition to the original stubborn set definitions.
The above definition yields three interpretations of a set $B\subseteq \actions$ for
a state $\sigma$.
\begin{itemize}
\item If $\st^\leftrightarrow_\sigma(B)$, then $B$ coincides with the original \emph{strong} stubborn
						set~\cite{valmari1988error,parle89}. 
\item If $\st^\leftarrow_\sigma(B)$, then $B$ approaches the weak stubborn set in~\cite{guardpor2},
	   a simplified version of~\cite{valmari1}, except that it
		lacks a necessary \emph{key action} (from \cite[Def.~1.17]{valmari1}).\footnote{
\textbf{\ref{i:d0}} is generally not preserved
with left-commutativity ($\star = \leftarrow$), as $\beta \notin B$ may disable $\alpha\in B$.
Consequently, $\beta$ may lead to a deadlock.
Because POR prunes all $\beta \notin B$,
$\st^\leftarrow_\sigma(B)$ is not a valid reduction (it may prune deadlocks).
The key action repairs this by demanding at least one \emph{key action} $\alpha$,
which strongly commutes, i.e.,  $\forall \beta\in B \colon \alpha \comm \beta$,
which by virtue of strong commutativity cannot be disabled by any $\beta\notin B$. 
}
\item If $\st^\rightarrow_\sigma(B)$, then $B$ also may yield an invalid POR, as
	it would consider two locking operations independent and thus potentially miss a deadlock.
\end{itemize}

This indicates that POR, unlike TR, cannot benefit from right-commutativity.
The consequences of this difference are further discussed in \autoref{sec:comparison}.
The strong version of our bare-bone stubborn set definition, on the other hand,
is equivalent to the one presented~\cite{valmari1} and thus
preserves the `stubbornness' property~(\autoref{th:valmari}).
If we define \concept{semi-stubbornness}, written $\sst^\star_\sigma$,
like stubbornness minus the \textbf{\ref{i:d0}} requirement,
%
then we can prove a similar theorem for semi-stubborn sets~(\autoref{th:sst}).\footnote{
We will show that semi-stubbornness, i.e., $\sst^\shortleftarrow_\sigma(B)$ (without key),
is sufficient for stubborn TR, which may therefore prune deadlocks.
Contrarily, invariant-preserving stubborn POR is strictly stronger
than the basic stubborn set (see below), and hence also preserves all deadlocks.
(This is relevant for the POR/TR comparison in \autoref{sec:comparison}.)
}
This `stubbornness' of semi-$\shortleftarrow$ and semi-$\shortrightarrow$ stubborn sets
is used below to define dynamic movers. First, we briefly return our attention to
stubborn POR, recalling how it preserves properties beyond deadlocks and the computation of~$\st_\sigma$.

\vspace{-.5em}
\begin{theorem}[\cite{valmari1}]
\label{th:valmari}
If $B\subseteq \actions$, $\st^{\leftrightarrow}_\sigma(B)$ and $\sigma\tr{\beta}{}\sigma'$
for $\beta\notin B$, then
$\st^{\leftrightarrow}_{\sigma'}(B)$.
\vspace{-.5em}
\end{theorem}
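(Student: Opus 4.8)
The plan is to verify the three provisos of \autoref{def:ss} with $\star=\leftrightarrow$ directly at $\sigma'$, using the corresponding provisos at $\sigma$ together with two auxiliary observations about the transition $\sigma\tr{\beta}{}\sigma'$ with $\beta\notin B$; this is in essence the classical argument of~\cite{valmari1}.

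The first step is to show that firing $\beta\notin B$ neither \emph{enables} nor \emph{disables} any action of $B$, so that $\en(\sigma')\cap B = \en(\sigma)\cap B$. For ``no enabling'': if some $\alpha\in B\cap\overline{\en}(\sigma)$ were enabled in $\sigma'$, then \textbf{\ref{i:d2}} at $\sigma$ supplies a NES $E\in\nes_\sigma(\alpha)$ with $E\subseteq B$, and applying the defining property of the NES to the one-step path $\sigma\xrightarrow{\beta}\sigma'$ (note $\alpha\in\dis(\sigma)$ and $\alpha\in\en(\sigma')$) forces $\beta\in E\subseteq B$, contradicting $\beta\notin B$. For ``no disabling'': for $\alpha\in B\cap\en(\sigma)$, the contrapositive of \textbf{\ref{i:d1}} at $\sigma$ yields $\beta\stackrel{\leftrightarrow}{\bowtie}\alpha$ (otherwise $\beta\in B$), and strong commutativity of two actions both enabled in $\sigma$ prevents either from disabling the other, so $\alpha$ remains enabled in $\sigma'$.

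Given these, the three provisos at $\sigma'$ follow quickly. \textbf{\ref{i:d0}}: since $\beta\in\en(\sigma)$ we have $\en(\sigma)\neq\emptyset$, so \textbf{\ref{i:d0}} at $\sigma$ gives some $\alpha\in B\cap\en(\sigma)=B\cap\en(\sigma')$. \textbf{\ref{i:d1}}: whether $\gamma\stackrel{\leftrightarrow}{\not\bowtie}\alpha$ holds is a global fact about the pair $(\gamma,\alpha)$, independent of the state, and $B\cap\en(\sigma')\subseteq B\cap\en(\sigma)$, so the requirement for each $\alpha\in B\cap\en(\sigma')$ is already discharged by \textbf{\ref{i:d1}} at $\sigma$. \textbf{\ref{i:d2}}: let $\alpha\in B\cap\overline{\en}(\sigma')$; by ``no disabling'' $\alpha\notin\en(\sigma)$, hence $\alpha\in B\cap\overline{\en}(\sigma)$ and \textbf{\ref{i:d2}} at $\sigma$ yields $E\in\nes_\sigma(\alpha)$ with $E\subseteq B$. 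It then remains to check $E\in\nes_{\sigma'}(\alpha)$: given any path $\sigma'\xrightarrow{\gamma_1,\dots,\gamma_m}\tau$ with $\alpha\in\en(\tau)$, prepend $\beta$ to obtain a path $\sigma\xrightarrow{\beta,\gamma_1,\dots,\gamma_m}\tau$ from $\sigma$ with $\alpha\in\dis(\sigma)$, so $E$ meets $\{\beta,\gamma_1,\dots,\gamma_m\}$; since $\beta\notin B\supseteq E$, $E$ in fact meets $\{\gamma_1,\dots,\gamma_m\}$, as required.

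I expect the main obstacle to be this last step, transferring a necessary enabling set from $\sigma$ to $\sigma'$: it is the only place where one reasons about arbitrarily long future paths rather than the single transition $\beta$, and it hinges on the fact that an action $\beta\notin B$ can never belong to a NES that is contained in $B$. The ``no disabling'' observation is the secondary difficulty, being the only point where the choice $\star=\leftrightarrow$ (strong commutativity) is genuinely used — with $\star=\shortleftarrow$ this step breaks, which is exactly why the weak stubborn sets discussed after the definition require the extra key action.
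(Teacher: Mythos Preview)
Your proof is correct and follows essentially the same route as the paper. The paper's own argument is embedded in the proof of \autoref{th:sst} (the case $\star=\leftrightarrow$ there is flagged as ``also conclud[ing] the proof of \autoref{th:valmari}''): it handles \textbf{\ref{i:d2}} by observing that $\beta$ cannot enable $\alpha$ and that the same NES still works at $\sigma'$ ``by definition of NESs'', and handles \textbf{\ref{i:d1}} by noting that strong commutativity rules out $\beta$ disabling $\alpha$; your prepend-$\beta$ argument for transferring the NES and your explicit verification of \textbf{\ref{i:d0}} simply spell out details the paper leaves implicit.
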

\begin{theorem}\label{th:sst}
If $B\subseteq \actions$, $\sst^{\star}_\sigma(B)$ and $\sigma \tr{\beta}{}\sigma'$
for $\beta\notin B$,
then $\sst^{\star}_{\sigma'}(B)$ for $\star \in \set{\shortleftarrow,\leftrightarrow}$,
as well as for $\star \in \set{\shortrightarrow}$
provided that $\beta$ does not disable
a stubborn action, i.e., $\en(\sigma) \cap B \subseteq \en(\sigma') \cap B$.
\end{theorem}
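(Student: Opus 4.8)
The plan is to follow the proof of \autoref{th:valmari}. Since $\sst^{\star}_{\sigma}$ is just $\st^{\star}_{\sigma}$ with requirement \textbf{\ref{i:d0}} dropped, it suffices to re-establish the disabled-action clause \textbf{\ref{i:d2}} and the enabled-action clause \textbf{\ref{i:d1}} at $\sigma'$. The first move is to extract what $\beta \notin B$ buys. From the contrapositive of \textbf{\ref{i:d1}} at $\sigma$ we get $\beta \stackrel{\star}{\bowtie} \alpha$ for every $\alpha \in B \cap \en(\sigma)$ (fact~(a)). From \textbf{\ref{i:d2}} at $\sigma$: for every $\alpha \in B \cap \overline{\en}(\sigma)$ there is $E \in \nes_\sigma(\alpha)$ with $E \subseteq B$, hence $\beta \notin E$; feeding the one-step path $\sigma \tr{\beta}{} \sigma'$ into the NES property then forces $\alpha \in \overline{\en}(\sigma')$, so $\en(\sigma') \cap B \subseteq \en(\sigma) \cap B$ (fact~(b)) --- and this holds for all three choices of $\star$, without any extra hypothesis.

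I would then check \textbf{\ref{i:d1}} at $\sigma'$. Take $\alpha \in B \cap \en(\sigma')$; by~(b), $\alpha \in B \cap \en(\sigma)$, so \textbf{\ref{i:d1}} at $\sigma$ already forces every $\gamma$ with $\gamma \stackrel{\star}{\not\bowtie} \alpha$ into $B$. Since $\star$-(in)dependence is a static relation between actions, exactly the same $\gamma$'s are demanded at $\sigma'$, which settles this clause.

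The real content is \textbf{\ref{i:d2}} at $\sigma'$. Take $\alpha \in B \cap \overline{\en}(\sigma')$. In Case~1, $\alpha \in \overline{\en}(\sigma)$ as well: I take the witness $E \in \nes_\sigma(\alpha)$, $E \subseteq B$, and argue it still works at $\sigma'$ --- any path $\sigma' \xrightarrow{\gamma_1,\dots,\gamma_n} \sigma''$ with $\alpha \in \en(\sigma'')$, prefixed by $\beta$, becomes a path out of $\sigma$ that re-enables the $\sigma$-disabled $\alpha$, so it meets $E$ by the NES property at $\sigma$, and since $\beta \notin E$ the suffix $\gamma_1,\dots,\gamma_n$ already meets $E$. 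In Case~2, $\alpha \in \en(\sigma)$, i.e.\ $\beta$ disabled $\alpha$, the three parametrizations diverge. For $\star = \leftrightarrow$: by~(a) $\beta \comm \alpha$, and strong commutativity keeps $\beta$ from disabling $\alpha$, so $\alpha \in \en(\sigma')$ --- the case is vacuous. For $\star = \shortrightarrow$: the case is excluded outright by the extra hypothesis $\en(\sigma) \cap B \subseteq \en(\sigma') \cap B$. For $\star = \shortleftarrow$: the case can occur, but \textbf{\ref{i:d1}} at $\sigma$ rescues it --- any action $\gamma$ capable of (re)enabling $\alpha$ witnesses that $\alpha$ cannot be moved to the left past $\gamma$, hence $\gamma \stackrel{\shortleftarrow}{\not\bowtie} \alpha$ and $\gamma \in B$; consequently a NES of $\alpha$ available at $\sigma'$ --- whose members are exactly such enabling actions (e.g.\ a guard- or program-counter-based NES as in the NES example) --- lies in $B$.

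The step I expect to cost the most is Case~2. Two points need care: for $\star = \leftrightarrow$, that strong commutativity of $\beta$ with the enabled $\alpha$ genuinely rules out $\beta$ disabling $\alpha$ in the nondeterministic setting (it is $\comm$, not the weaker $\bowtie$, that is needed here); and for $\star = \shortleftarrow$, that the $\nes_{\sigma'}(\alpha)$ witness can be chosen to consist of actions that can enable $\alpha$, so that it is already contained in $B$ via \textbf{\ref{i:d1}}. The latter is also exactly where the asymmetry with $\shortrightarrow$ surfaces: $\rcomm$ forbids $\beta$ only from \emph{enabling} $\alpha$, not from disabling it, so nothing usable can be harvested from \textbf{\ref{i:d1}} and the hypothesis $\en(\sigma) \cap B \subseteq \en(\sigma') \cap B$ must be assumed. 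The remaining obligations --- facts~(a) and~(b), the static nature of $\star$-dependence, and the $\sigma$-to-$\sigma'$ transfer of the NES in Case~1 --- are routine.
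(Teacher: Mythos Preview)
Your proposal is correct and follows essentially the same approach as the paper's proof: split on whether $\alpha\in B$ is disabled or enabled at $\sigma$; in the disabled case transfer the NES witness to $\sigma'$; in the enabled case, if $\alpha$ stays enabled \textbf{\ref{i:d1}} carries over, and if $\beta$ disabled $\alpha$ then argue vacuity for $\star\in\{\leftrightarrow,\shortrightarrow\}$ and, for $\star=\shortleftarrow$, use \textbf{\ref{i:d1}} at $\sigma$ to place every possible enabler of $\alpha$ into $B$, yielding a NES $\subseteq B$ at $\sigma'$. Your facts (a), (b) and the explicit path-prefixing argument for the NES transfer are just more detailed renderings of steps the paper leaves terse.
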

\vspace{-.5em}

\paragraph{Stubborn sets for safety properties}
To preserve a safety property such as $\Box \varphi$ (i.e. $\varphi$ is invariant),
a stubborn set $B$
($\st^{\leftrightarrow}_\sigma(B) = \true$) needs to satisfy two additional
requirements~\cite{explosion} called \textbf{S} for \concept{safety} and
\textbf{V} for \concept{visibility}.
To express \textbf{V}, we denote actions enabling $\varphi$ with
$\actions_\oplus^\varphi$
and those disabling the proposition with $\actions_\ominus^\varphi$.
Those combined form the visible actions:
$\actions^\varphi_{\mathit{vis}} \defn \actions_\ominus^{\varphi}\cup \actions_\oplus^{\varphi}$.
For \textbf{S}, recall that
$\sigma \xdashrightarrow\alpha \sigma'$ is a reduced transition.
\concept{Ignoring states} disrespect \textbf{S}.

\begin{itemize}
\item[\textbf S] $\forall \beta\hspace{-.8mm}\in\hspace{-.8mm}\en(\sigma) \colon \exists \sigma' \colon 
							\sigma \portrans^{\hspace{-1mm}*}\hspace{-1mm}\sigma' \land \beta\hspace{-.8mm}\in\hspace{-.8mm}\por(\sigma')$
\hfill
			\textit{(never keep ignoring pruned actions)}
\item[\textbf V] $B\cap\en(\sigma)\cap\actions_{\mathit{vis}}^\varphi \neq \emptyset \implies \actions_{\mathit{vis}}^\varphi \subseteq B$
\hfill
			\textit{(either all or no visible, enabled actions)}
\end{itemize}

\paragraph{Computing stubborn sets and heuritics}
POR is not deterministic as we may compute many different valid stubborn 
sets for the same state and we can even select different ignoring states
to enforce the \textbf{S} proviso
(i.e. the state $\sigma'$ in the \textbf{S} condition).
A general approach to obtain good reductions is to compute a
stubborn set with the fewest enabled actions, so that the $\por(\sigma)$
set is the smallest and the most actions are pruned in $\sigma$.
However, this does not necessarily lead to the best reductions as
observed several times~\cite{explosion,fairtesting,varpa}.
Nonetheless, this is the best heuristic currently available, and it
generally yields good results~\cite{guardpor2}.

The $\forall\exists$-recursive structure of Def.~\ref{def:ss} indicates that
establishing the smallest stubborn set is an \NP-complete problem,
which indeed it is~\cite{optimal}.
Various algorithms exist to heuristically compute small stubborn
sets~\cite{guardpor2,deletion}. Only the deletion
algorithm~\cite{deletion} provides guarantees on the returned sets
(that no strict subset of the return set is also stubborn). On the other hand, the
guard-based approach~\cite{guardpor2} has been shown to deliver good reductions in reasonable
time.

To implement the \textbf{S} proviso,
Valmari~\cite{valmari1} provides an algorithm~\cite[Alg.~1.18]{valmari1}
that yields the fewest possible ignoring states, runs in linear time and
can even be performed on-the-fly,
i.e. while generating the reduced transition system. It is based
on Tarjan's strongly connected component (SCC) algorithm~\cite{tarjan}.

The above methods are relevant for stubborn TR
as STR also needs to compute ($\star$-)stubborn sets and
avoid ignoring states (recall \text{L3} from \autoref{sec:prelim}).

\subsection{Reduced transaction systems}

TR merges sequential actions into (atomic) transactions and
in the process removes interleavings (at the states internal to the transaction) just like POR.
We present a dynamic TR that decides to prolong transactions on a per-state~basis.
We use stubborn sets to identify left and right moving actions in each state.
Unlike stubborn set POR, and much like ample-set POR~\cite{katz-peled},
we rely on the process-based action decomposition to identify sequential parts of the system.

\newcommand\rem{\hspace{-.6mm}}
Recall that actions in the pre-phase should commute to the right and actions in the post-phase should
commute to the left with other threads. We use the notion of stubborn sets
to define \concept{dynamic left and right movers} in \autoref{eq:left} and \ref{eq:exclude}
for $\tuple{\sigma,\alpha,\sigma'}\in\trans_i $.
Both mover definitions are based on semi-stubborn sets.
Dynamic left movers are state-based requiring all outgoing local transitions to ``move'',
whereas right movers are action-based allowing different reductions for various non-deterministic paths.
The other technicalities of the definitions stem from the different premises of
left and right movability (see \autoref{sec:prelim}).
Finally, both dynamic movers exhibit a  crucial monotonicity property, similar to
previously introduced `stubbornness', as expressed by \autoref{lem:dlm} and \autoref{lem:drm}.
\vspace{-1ex}
\begin{IEEEeqnarray}{lll}\label{def:m}
\rem\rem\rem\rem\rem\rem\rem\rem\rem\rem\rem
\lmv_i(\sigma) &\defn 
				 &\exists B \colon
		\sst_{\sigma}^\shortleftarrow(B),~
					B \cap \en(\sigma) = \actions_i \cap \en(\sigma)
		\label{eq:left}\\*
\rem\rem\rem\rem\rem\rem\rem\rem\rem\rem\rem
\rmv_i({\sigma,\alpha,\sigma'})
			&\defn 
			&\exists B \colon
		\sst_{\sigma}^\shortrightarrow(B),~\alpha\in B,~
		 B\cap \en(\sigma') \subseteq \actions_i,~
		 B\cap \en(\sigma) = \set{\alpha}
	\label{eq:exclude}
\vspace{-1ex}
\end{IEEEeqnarray}

\begin{lemma}\label{lem:dlm}
The dynamic left-moving property is never
remotely disabled, i.e.:
if $\lmv_i(\sigma_1) \land i\neq j \land \sigma_1\tr{\beta}{j} \sigma_2$,
then $\lmv_i(\sigma_2)$.
\end{lemma}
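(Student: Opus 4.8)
The plan is to show that the \emph{same} set $B$ that certifies $\lmv_i(\sigma_1)$ also certifies $\lmv_i(\sigma_2)$. Unfolding the hypothesis, fix $B$ with $\sst^{\shortleftarrow}_{\sigma_1}(B)$ and $B\cap\en(\sigma_1)=\actions_i\cap\en(\sigma_1)$. By the definition of $\lmv$, it then suffices to establish (a) $\sst^{\shortleftarrow}_{\sigma_2}(B)$ and (b) $B\cap\en(\sigma_2)=\actions_i\cap\en(\sigma_2)$. The first move is to note that $\beta\notin B$: since $\actions=\biguplus_{k\in\procs}\actions_k$ is a disjoint union and $\beta\in\actions_j$ with $j\neq i$, we have $\beta\notin\actions_i$, while $\sigma_1\tr{\beta}{j}\sigma_2$ gives $\beta\in\en(\sigma_1)$, so $\beta\in B$ would force $\beta\in B\cap\en(\sigma_1)=\actions_i\cap\en(\sigma_1)\subseteq\actions_i$, a contradiction. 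With $\beta\notin B$ in hand, fact (a) is immediate from \autoref{th:sst} applied in the case $\star=\shortleftarrow$ (which carries no side condition).

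For (b), the inclusion ``$\subseteq$'' is the routine ``nothing new enters the enabled part of a stubborn set'' argument: if $\gamma\in B\cap\en(\sigma_2)$ but $\gamma\notin\en(\sigma_1)$, then proviso \textbf{\ref{i:d2}} of $\sst^{\shortleftarrow}_{\sigma_1}(B)$ supplies a NES $E\in\nes_{\sigma_1}(\gamma)$ with $E\subseteq B$, and since $\gamma$ passes from disabled to enabled along the length-one path $\sigma_1\tr{\beta}{}\sigma_2$, the defining property of NESs yields $\beta\in E\subseteq B$, contradicting $\beta\notin B$; hence $\gamma\in\en(\sigma_1)$ and so $\gamma\in B\cap\en(\sigma_1)=\actions_i\cap\en(\sigma_1)\subseteq\actions_i$. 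The inclusion ``$\supseteq$'' is trivial for $\gamma\in\actions_i\cap\en(\sigma_1)$ (then $\gamma\in B\cap\en(\sigma_1)\subseteq B$), so the only remaining obligation is: the remote step $\beta$ does not newly enable any local action $\gamma\in\actions_i$ outside $B$.

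That last obligation is the part I expect to be the main obstacle, and I would attack it through the necessary-enabling-set machinery: if the single action $\beta$ takes $\gamma$ from disabled to enabled, then \emph{every} $E\in\nes_{\sigma_1}(\gamma)$ must contain $\beta$; but a local action $\gamma\in\actions_i$ that is not control-activated in $\sigma_1$ has a necessary enabling set consisting of its control-flow predecessors, which lies entirely in $\actions_i$ and hence excludes $\beta$ --- a contradiction. Consequently thread $i$'s control in $\sigma_1$ is already positioned at $\gamma$, so $\gamma$ is one of thread $i$'s control-enabled actions there; using this one argues (replacing $B$ by $B\cup\{\gamma\}$ closed under a control-flow NES for $\gamma$ if necessary, which only adds disabled local actions and therefore leaves \textbf{\ref{i:d2}}, \textbf{\ref{i:d1}}, the property $\beta\notin B$, and the identity $B\cap\en(\sigma_1)=\actions_i\cap\en(\sigma_1)$ intact) that $\gamma$ may be assumed to lie in $B$. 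Combining (a) and (b) then produces a semi-$\shortleftarrow$-stubborn witness at $\sigma_2$ whose enabled part is exactly $\actions_i\cap\en(\sigma_2)$, i.e.\ $\lmv_i(\sigma_2)$.
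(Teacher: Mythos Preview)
Your overall strategy --- reuse the same witness $B$, show $\beta\notin B$, invoke \autoref{th:sst} for part (a), and use the NES proviso \textbf{\ref{i:d2}} for the ``$\subseteq$'' half of (b) --- is exactly the paper's argument, only with the step $\beta\notin B$ made explicit. Where you diverge is in isolating the ``$\supseteq$'' direction of $B\cap\en(\sigma_2)=\actions_i\cap\en(\sigma_2)$ as a separate obligation; the paper's proof simply asserts the full equality after establishing that $\beta$ cannot enable any $\gamma\in B\cap\dis(\sigma_1)$.

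Your attack on that remaining obligation, however, does not go through. You first argue (reasonably, under a program-counter discipline that is not formally part of the abstract CTS model) that if $\beta$ newly enables some $\gamma\in\actions_i$ then $\gamma$ must already be control-activated at $\sigma_1$; but you then propose to absorb $\gamma$ into $B$ by closing under ``a control-flow NES for $\gamma$''. These two steps are in tension: once $\gamma$ is control-activated, its control-flow predecessors are \emph{not} a necessary enabling set at $\sigma_1$ --- the single remote step $\beta$ enables $\gamma$ without executing any of them --- so they cannot be used to discharge \textbf{\ref{i:d2}} for $\gamma$. More decisively, the defining property of $\nes$ applied to the one-step path $\sigma_1\tr{\beta}{}\sigma_2$ forces \emph{every} $E\in\nes_{\sigma_1}(\gamma)$ to contain $\beta$. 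Hence any enlargement of $B$ that adds the disabled action $\gamma$ and then closes under \textbf{\ref{i:d2}} must also pull in $\beta$, destroying both the invariant $\beta\notin B$ and the identity $B\cap\en(\sigma_1)=\actions_i\cap\en(\sigma_1)$ on which the rest of your argument relies. So the clause ``$\gamma$ may be assumed to lie in $B$'' is not sound as written, and the ``$\supseteq$'' direction remains open in your proposal.
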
\vspace{-3ex}
\begin{lemma}\label{lem:drm}
Dynamic right-movers retain dynamic moveability after moving, i.e.:
if $\rmv_i(\sigma_1,\alpha,\sigma_2) \land \sigma_1\tr{\alpha}{i} \sigma_2 \tr{\beta}{j} \sigma_3$ for $i\neq j $,
then $\exists \sigma_1 \tr{\beta}{j}\sigma_4 \colon \rmv_i(\sigma_4,\alpha,\sigma_3)$.
\end{lemma}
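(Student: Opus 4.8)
The plan is to exhibit, from the witness $B$ for $\rmv_i(\sigma_1,\alpha,\sigma_2)$, a witness $B'$ for $\rmv_i(\sigma_4,\alpha,\sigma_3)$, where $\sigma_4$ is the state reached from $\sigma_1$ by $\beta$. First I would observe that $\beta_j \notin B$: indeed $B \cap \en(\sigma_1) = \set{\alpha}$ and $\alpha \in \actions_i$, so if the enabled remote action $\beta \in \actions_j$ were in $B$ it would contradict $j \neq i$; hence $\beta \notin B$. Next I would check that $\beta$ does not disable any stubborn action of $B$ at $\sigma_1$: the only enabled member of $B$ at $\sigma_1$ is $\alpha$, and $\alpha \in B$ is a right-mover in the stubborn sense (by \textbf{\ref{i:d1}} every action $\star$-dependent with $\alpha$ is in $B$, with $\star = \shortrightarrow$), so $\beta \notin B$ cannot be right-dependent with $\alpha$, meaning $\alpha$ right-commutes with $\beta$; in particular $\beta$ cannot disable $\alpha$. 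Therefore $\en(\sigma_1) \cap B = \set\alpha \subseteq \en(\sigma_4) \cap B$, and the side condition of \autoref{th:sst} for $\star = \shortrightarrow$ is met, giving $\sst^{\shortrightarrow}_{\sigma_4}(B)$.

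The remaining obligations for $B$ to witness $\rmv_i(\sigma_4,\alpha,\sigma_3)$ are: $\alpha \in B$ (unchanged), $B \cap \en(\sigma_4) = \set\alpha$, and $B \cap \en(\sigma_3) \subseteq \actions_i$. For the second, I would argue that no remote action can sneak into $\en(\sigma_4) \cap B$: by \textbf{\ref{i:d2}}, every disabled action of $B$ at $\sigma_1$ has one of its NESs inside $B$, and the only way such an action becomes enabled at $\sigma_4$ is for $\beta$ to fire a member of that NES; but $\beta \notin B$, so the NES — being a subset of $B$ — is untouched, and the action stays disabled. Combined with the fact that $\alpha$ stays enabled (shown above) and $\beta$ itself stays out of $B$, this yields $\en(\sigma_4)\cap B = \set\alpha$. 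For the third obligation I would use right-commutativity of $\alpha$ with $\beta$ to relate $\sigma_3$ and $\sigma_2$: from $\sigma_1 \tr{\alpha}{i}\sigma_2\tr{\beta}{j}\sigma_3$ and $\sigma_1\tr{\beta}{j}\sigma_4$, right-commutativity gives $\sigma_4 \tr{\alpha}{i}\sigma_3$ (this is exactly where $\sigma_4$ and the diagram of \autoref{eq:right} are needed). Then $\sigma_3$ is obtained from $\sigma_2$ by firing $\beta \notin B$, so applying \autoref{th:sst}'s stubbornness-preservation reasoning once more — or directly re-running the NES argument from $\sigma_2$ to $\sigma_3$ — shows $\en(\sigma_3)\cap B \subseteq \en(\sigma_2)\cup(\text{nothing new from }B)$; since $B \cap \en(\sigma_2)\subseteq \actions_i$ was part of the hypothesis, and $\beta$ adds no new $B$-members, we get $B\cap\en(\sigma_3)\subseteq\actions_i$. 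Taking $B' \defn B$ then completes the proof.

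The main obstacle I anticipate is the bookkeeping around which state the ``no new enabled stubborn action'' argument is anchored at: the monotonicity of semi-$\shortrightarrow$-stubbornness in \autoref{th:sst} is only conditional (it needs $\en(\sigma)\cap B \subseteq \en(\sigma')\cap B$), so I must be careful to verify that hypothesis at \emph{both} the $\sigma_1\to\sigma_4$ step and, if I route through it, the $\sigma_2\to\sigma_3$ step, rather than invoking the unconditional $\leftrightarrow$-version. The clean way is to first establish $\alpha \rcomm \beta$ from \textbf{\ref{i:d1}} (so the commuting square and the non-disabling both come for free), and only afterwards invoke \autoref{th:sst}; doing it in the other order risks a circular dependence between ``$\beta$ doesn't disable $\alpha$'' and ``$B$ is still semi-stubborn at $\sigma_4$''. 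A minor secondary point is confirming that $\sigma_4$ indeed exists, i.e.\ that $\beta$ is enabled at $\sigma_1$ — but this is immediate since $\sigma_2 \tr{\beta}{j}\sigma_3$ and $\alpha$ right-commutes with $\beta$, forcing $\beta \in \en(\sigma_1)$.
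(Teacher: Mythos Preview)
Your approach is essentially the paper's: reuse the witness $B$, show $\beta \notin B$, derive $\alpha \rcomm \beta$ from \textbf{\ref{i:d1}}, obtain the commuting square, then verify the three remaining conditions of $\rmv_i(\sigma_4,\alpha,\sigma_3)$ via the NES argument (\textbf{\ref{i:d2}}) and \autoref{th:sst}. However, your opening step contains precisely the circularity you later flag. You argue $\beta \notin B$ from $B \cap \en(\sigma_1) = \{\alpha\}$, which presupposes $\beta \in \en(\sigma_1)$; you then plan to derive $\beta \in \en(\sigma_1)$ from $\alpha \rcomm \beta$; but $\alpha \rcomm \beta$ comes from \textbf{\ref{i:d1}} and needs $\beta \notin B$ first. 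Your proposed ``clean way'' (establish $\alpha \rcomm \beta$ before invoking \autoref{th:sst}) does not break this loop, because getting $\alpha \rcomm \beta$ already depends on $\beta \notin B$.

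The paper cuts the circle by using the \emph{other} hypothesis in the definition of $\rmv_i$: since $\beta \in \en(\sigma_2)$ is given directly by $\sigma_2 \tr{\beta}{j}\sigma_3$, and $B \cap \en(\sigma_2) \subseteq \actions_i$ while $\beta \in \actions_j$ with $j \neq i$, one obtains $\beta \notin B$ immediately, with no appeal to $\en(\sigma_1)$ or commutativity. Once you make this one-line substitution, everything downstream in your plan (the commuting square, $B \cap \en(\sigma_4) = \{\alpha\}$, the side condition of \autoref{th:sst}, and $B \cap \en(\sigma_3) \subseteq \actions_i$) goes through exactly as in the paper.
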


To establish stubborn TR, 
\autoref{def:trs} first annotates the transition system \cts
with thread-local phase information, i.e. one phase variable for each thread
that is only modified by that thread.
Phases are denoted with \NN (for transaction external states),
\RR (for for states in the pre-phase) and \LL (for states in the post phase).
Because phases now depend on the commutativity established via
dynamic movers \autoref{eq:left} and \ref{eq:exclude},
the reduction (not included in the definition, but discussed below it) becomes dynamic.
\autoref{lem:preserves} follows easily as the definition does not yet enforce the reduction, but mostly `decorates' the transition system.

\begin{definition}[Transaction system]\label{def:trs}
Let $H \defn \set{\NN, \RR, \LL} ^ P$ be an array of local phases.
The transaction system is CTS $\cts'\defn \tuple{\tstates, \ttrans, \actions, \tsint}$ such that:
\vspace{-2ex}
\begin{IEEEeqnarray}{rllrl}
\hspace{-5ex}\tstates 	\defn &\parbox{0cm}{\mbox{$\states \times \phases,~~~~~~~~~~
					\tsint 	\defn\tuple{\sint, \NN^{{\procs}}}	$}}				\nonumber\\*
\hspace{-5ex}\ttrans_i 	\defn &\parbox{0cm}{\mbox{$\{ \tuple{\tuple{\sigma,h},\alpha,\tuple{\sigma',h'}} \in{\tstates}\times\actions\times\tstates \mid
				(\sigma,\alpha,\sigma')\in T_i
				,\,\,\, \forall j\neq i\colon h_j'=h_j,$}}			\nonumber\\*
&& \RR &\hphantom{|XX}& \text{iff } h_i \neq \LL
						\,\,\,\land\,\,\,
						\rmv_i(\sigma,\alpha,\sigma')
						\,\,\,\land\,\,\,
						\alpha\notin \actions_\ominus^\varphi			\label{eq:1} \\*
&\smash{h_i'\hspace{-.5ex}=\hspace{-.5ex}\left\{\IEEEstrut[8\jot] \right.}
& \LL && \text{if } \lmv_i(\sigma')
						\,\,\,\land\,\,\, 
						\en(\sigma')\cap\actions_i\cap \actions_\oplus^\varphi=\emptyset	\label{eq:2}\\*
&& \NN && \text{otherwise (or as alternative when \autoref{eq:2} holds)} 				\label{eq:3}\\*
&  \parbox{0cm}{\mbox{$\}$
}}
\nonumber
\vspace{-2ex}
\end{IEEEeqnarray}

\noindent

\end{definition}
\begin{lemma}\label{lem:preserves}
\autoref{def:trs} preserves invariants:
$\reach(\cts) \models\Box\varphi \Leftrightarrow\reach(\cts')\models \Box\varphi$.
\end{lemma}

The conditions in \autoref{eq:2} and \autoref{eq:3} overlap on purpose,
allowing us to enforce termination below.
The transaction system effectively partitions the state spaces on the
phase for each thread $i$, i.e.
 $\overline{\NN_i} = \LL_i\cup\RR_i$ with 
$\NN_i \defn \set{ \tuple{\sigma,\phase} \mid \phase_i = \textsf{\NN} }, \text{etc}$.
The definition of $\ttrans_i$ further ensures three properties:

\begin{enumerate}[label=\Alph*.]
\item\label{i:A} $\LL_i$ states do not transit to $\RR_i$ states as $h_i=\LL \implies h_i'\neq \RR$ by \autoref{eq:1}.
\item Transitions ending in $\RR_i$ are dynamic right movers not disabling
		$\varphi$~by~\autoref{eq:1}.
\item Transitions starting in $\LL_i$ are dynamic left movers not enabling
		$\varphi$~by~\autoref{eq:2}.
\end{enumerate}

\noindent
Thereby $\ttrans_i$ implements the (syntactic) constraints from Lipton's
TR (see \autoref{sec:prelim})
dynamically in the transition system, except for \textbf{L3}.
Let $\trp \defn \set{\tuple{q,q'} \mid \tuple{q,\alpha,q'} \in T_i' }$. 
Next, \autoref{th:reduction} defines the reduced transaction system (RTS),
based primarily on the $\trtrans$ transition relation that only allows
a thread $i$ to transit when all other threads are in an external state,
thus eliminating interleavings
(\brtrans additionally skips internal states).
The theorem concludes that invariants are preserved given that a
termination criterium weaker than \textbf{L3} is met:
All $\LL_i$ must reach~an
~$\NN_i$~state. Monotonicity of dynamic movers plays a key role in its proof.

\begin{theorem}[Reduced Transaction System (RTS)]\label{th:reduction} We define for all $i$:
\vspace{-2ex}
\begin{IEEEeqnarray*}{llr}
\trtrans_i &\defn (\cup_{j\neq i} \NN_j) \lrestr \trp
					&\text{($i$  only transits when all $j$ are external)}\\
\brtrans_i &\defn \NN_i\lrestr (\trtrans_i\rrestr \overline{\NN_i})^* \trtrans_i \rrestr \NN_i
&\text{(skip internal states transition relation)}
\vspace{-2ex}
\end{IEEEeqnarray*}
The RTS is a CST
$\stackrel\brtrans{\scriptsize \cts} \defn\rem\rem\rem \langle \tstates,\{\tuple{q,\alpha_i,q'}\mid
                                        q\stackrel{\alpha_i}\brtrans_i q' \},\actions,\tsint \rangle$.
Now, provided that
$\forall \sigma \in \LL_i \colon \exists \sigma'\in\NN_i \colon \sigma\hookrightarrow^*_i \sigma' $,
we have
$\reach(\cts')\models \Box\varphi \Longleftrightarrow
\reach(\stackrel\brtrans{\scriptsize \cts})\models \Box\varphi$.
\end{theorem}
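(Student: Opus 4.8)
The plan is to reduce this to the already-established \autoref{lem:preserves} by showing that passing from $\cts'$ to the reduced transaction system $\stackrel\brtrans{\scriptsize \cts}$ preserves reachability of $\varphi$-violating states in both directions. The direction $\reach(\cts')\not\models\Box\varphi \Longrightarrow \reach(\stackrel\brtrans{\scriptsize\cts})\not\models\Box\varphi$ is the substantive one; the converse is immediate because $\brtrans_i \subseteq (\trp)^*$ and hence every reachable state of the RTS is reachable in $\cts'$, so any violation found in the RTS already occurs in $\cts'$. For the forward direction I would take a witnessing trace $\tsint = q_0 \to q_1 \to \cdots \to q_m$ in $\cts'$ with $q_m \notin \varphi$ (lifting $\varphi$ to $\tstates$ by projecting away the phase component, as in \autoref{lem:preserves}) and transform it, step by step, into a trace of $\stackrel\brtrans{\scriptsize\cts}$ reaching a state that still violates $\varphi$ — possibly a different state, since the RTS skips internal states, but one guaranteed to violate $\varphi$ by properties B and C of $\ttrans_i$ together with the visibility conditions $\alpha\notin\actions_\ominus^\varphi$ in \autoref{eq:1} and $\en(\sigma')\cap\actions_i\cap\actions_\oplus^\varphi=\emptyset$ in \autoref{eq:2}.

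The core of the argument is a trace-rewriting / commutation scheme in the spirit of the picture in \autoref{sec:prelim}, but now with dynamic movers. First I would normalize the trace so that no two threads are simultaneously ``in progress'': whenever a thread $i$ is in a $\RR_i$ or $\LL_i$ phase and a remote action $\beta_j$ ($j\neq i$) is interleaved, I move $\beta_j$ out of the transaction. For an action sitting in the pre-phase this uses $\rmv_i$ and \autoref{lem:drm}: the right mover $\alpha$ of thread $i$ can be pushed later, past $\beta_j$, and \autoref{lem:drm} guarantees that after swapping, the remaining thread-$i$ action is still a dynamic right mover leading to the same phase, so the rewriting can be iterated down the trace. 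Symmetrically, for an action in the post-phase I use $\lmv_i$ and \autoref{lem:dlm}: a remote $\beta_j$ occurring during thread $i$'s post-phase is pushed earlier, before the left-moving block, and \autoref{lem:dlm} ensures the left-moving property survives the remote move so the block stays a valid post-phase. Iterating, every maximal $\RR_i\cdots\LL_i$ segment of thread $i$ is coalesced into a contiguous run with no remote interleavings, i.e., exactly a $\brtrans_i$ step (the internal states being the $\overline{\NN_i}$ states visited). Property A (no $\LL_i \to \RR_i$) guarantees each thread's phase sequence has the shape $\NN^* \RR^* \LL^* \NN^*$ so the segments are well-defined, and the termination hypothesis $\forall\sigma\in\LL_i\,\exists\sigma'\in\NN_i\colon\sigma\hookrightarrow^*_i\sigma'$ is what lets us close off any segment that the original trace left dangling in a post-phase — we append the thread-local continuation to reach an $\NN_i$ state, which by property C does not enable $\varphi$ and hence cannot repair a violation.

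The bookkeeping on $\varphi$ runs alongside: when we push a pre-phase action later past $\beta_j$, the intermediate state we skip was reached by a right mover that is not in $\actions_\ominus^\varphi$, so it satisfies $\varphi$ iff the pre-state does — in particular, if the skipped state violated $\varphi$, so does an endpoint we keep; symmetrically post-phase actions are not in $\actions_\oplus^\varphi$, so skipping a post-phase internal state cannot hide a violation either. Thus throughout the rewriting the property ``some visited state violates $\varphi$'' is maintained, and at the end the surviving violated state lies on a $\stackrel\brtrans{\scriptsize\cts}$-trace from $\tsint$. The main obstacle I anticipate is making the iterated commutation argument rigorous in the nondeterministic setting: \autoref{lem:drm} only promises \emph{some} $\sigma_4$ with $\rmv_i(\sigma_4,\alpha,\sigma_3)$, so the rewritten trace genuinely changes the intermediate states, and one must set up an induction (on the number of remote actions interleaved inside transactions, or on trace length with a suitable measure) whose invariant is strong enough — carrying simultaneously the phase-shape of each thread, the dynamic-mover annotation on each thread-$i$ step, and the $\varphi$-violation witness — so that each rewriting step strictly decreases the measure while re-establishing the invariant. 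Handling the pre-phase and post-phase cases uniformly, and the interaction when several threads have open transactions at once (peel them off outermost-first, or by a fixed thread order), is where the care is needed; the rest is the routine composition-of-movers bookkeeping already sketched by Lipton and Lamport.
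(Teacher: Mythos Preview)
Your plan is a direct trace-rewriting argument; the paper takes a different, modular route. It does not redo the Lipton-style commutation from scratch but instead reduces \autoref{th:reduction} to a general reduction theorem imported from prior work (\autoref{th:vmcai}, from~\cite{vmcai}). The substantive effort is \autoref{lem:th}, which verifies that the transaction system of \autoref{def:trs} satisfies the nine premises of \autoref{th:vmcai}; \autoref{lem:drm} and \autoref{lem:dlm} are invoked precisely there, to discharge the right- and left-commutativity premises. What your outline is missing, and what makes the modular route clean, is a family of thread bisimulations $\cong_i$ on $\tstates$ (each $\cong_i$ forgets thread~$i$'s phase component): the commutativity premises of \autoref{th:vmcai} are only demanded \emph{up to} these bisimulations. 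This absorbs the fact that after swapping $\alpha_i$ past a remote $\beta_j$, the $j$-phase of the resulting transaction-system state may differ, since whether $\beta$ is itself a dynamic mover depends on the state it fires from. You note that ``the rewritten trace genuinely changes the intermediate states'' and blame nondeterminism, but \autoref{lem:drm} actually returns the \emph{same} underlying end state $\sigma_3$; what drifts after the swap is the phase annotation, and that is the obstacle your invariant must neutralise --- which amounts to re-proving \autoref{th:vmcai} inline.

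One concrete slip: for the post-phase you write that the remote $\beta_j$ is ``pushed earlier, before the left-moving block''. Left-commutativity of $\alpha_i$ with $\beta_j$ lets you move $\alpha_i$ earlier (equivalently $\beta_j$ later), not the reverse; pushing $\beta_j$ before the block would need $\alpha_i$ to \emph{right}-commute. The correct symmetry with your pre-phase treatment is to shuttle remote actions to \emph{after} the post block, after which the termination hypothesis closes the transaction in $\NN_i$.
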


\algrenewcommand\algorithmicprocedure{\textbf{proc}}
\algrenewcommand\alglinenumber[1]{\scriptsize #1:}

\begin{algorithm}[t]
\caption{Algorithm reducing a CTS to an RTS using  $T_i'$ from \autoref{def:trs}.}
\label{alg:rtrs}
\begin{minipage}{0.5\textwidth}
\begin{algorithmic}[1]  
\State $V_1,V_2,Q_1,Q_2\colon \states'$
   

\Procedure{Search}{$\cts\defn \tuple{\states,\trans,\actions,\sint}$}
\AssignState{$Q_1$}{$\set{ \tuple{\sint, \NN^\procs} }$}
\AssignState{$V_1$}{$\emptyset$}
\While{$Q_1\neq \emptyset$}
	\AssignState{$Q_1$}{$Q_1 \setminus \set{\tuple{\sigma,\phase}}$ \textbf{for} $\tuple{\sigma,\phase}\in Q_1$}	
	\AssignState{$V_1$}{$V_1 \cup \set{\tuple{\sigma,\phase}}$}
	\State \textbf{assert}($\forall i \colon \phase_i = \NN $)	\label{l:assert}
	\For{$i \in P$}
		\State \textsc{Transaction}(T, $\tuple{\sigma,\phase}$, i)				
	\EndFor
\EndWhile
\State \textbf{assert}($V_1 = \reach(\brcts)$)
\EndProcedure
\Function{SCCroot}{$q, i$}
	\State \textbf{return }$q$ is a root of bottom SCC $C$
	\phantom{XxXXXX} \textbf{s.t.} $C \subseteq \LL_i \land C \subseteq V_2$
\EndFunction.

\algstore{alg2}
\end{algorithmic}
\end{minipage}
\hfill
\begin{minipage}{0.53\textwidth}
\begin{algorithmic}[1]
\algrestore{alg2}

\Procedure{Transaction}{$T$, \tuple{\sigma,\phase}, $i$}
	\AssignState{$Q_2$}{$\set{ \tuple{\sigma, \phase} }$}
	\AssignState{$V_2$}{$\emptyset$}
\While{$Q_2\neq \emptyset$}
	\AssignState{$Q_2$}{$Q_2 \setminus \set{\tuple{\sigma,\phase}}$ \textbf{for} $\tuple{\sigma,\phase}
		\in Q_2$}	
	\AssignState{$V_2$}{$V_2 \cup \set{\tuple{\sigma,\phase}}$}
\For{$\tuple{\sigma,\alpha,\sigma'} \in T_i$}
    	\State{\textbf{let} $h'$ \textbf{s.t.} $\tuple{\tuple{\sigma,h}\hspace{-.5mm},\alpha,\tuple{\sigma'\hspace{-.5mm},h'}}\in T'_i$\phantom{X}}
	\If{$\textsc{SCCroot}(\tuple{\sigma',h'}, i)$}
	\AssignState{$\phase'_i$}{$\NN$}\label{l:ext}
	\EndIf
	\If{$\tuple{\sigma'\hspace{-.5mm},h'}\not\sqsubseteq V_1 \cup V_2 \cup Q_1 \cup Q_2$}\label{l:sub2}
		\AssignState{$Q_2$}{$Q_2 \cup \set{\tuple{\sigma',h'}}$}
	\EndIf
	\If{$\phase'_i = \NN \land \tuple{\sigma'\hspace{-.5mm},h'} \notin V_1\cup Q_1$}\label{l:propagate}
	\AssignState{$Q_1$}{$Q_1 \cup \set{\tuple{\sigma',h'}}$}
	\EndIf
\EndFor
\EndWhile
\EndProcedure
\end{algorithmic}
\end{minipage}
\end{algorithm}

The following algorithm generates the RTS~~$\brcts$ of \autoref{th:reduction}
from a \cts.
The state space search is split into two:
One main search, which only processes external states ($\bigcap_i \NN_i$), and an additional
search (\textsc{Transaction}) which explores the transaction for a single thread $i$.
Only when the transaction search encounters an external state, it is
propagated back to the queue $Q_1$ of the main search, provided it is new there
(not yet in $V_1$, which is checked at Line~\ref{l:propagate}).
The transaction search terminates early when an internal state $q$ is found to
be subsumed by an external state already encountered in the outer search
(see the $q\not\sqsubseteq V_1$ check at Line~\ref{l:sub2}).
Subsumption is induced by the following order on phases, which is
lifted to states and sets of states $X\subseteq \states'$:
$\RR \sqsubset \LL \sqsubset \NN$ with $a \sqsubseteq b \Leftrightarrow a = b \lor a \sqsubset b$,
$\tuple{\sigma,\rem\phase} \rem\sqsubseteq\rem \tuple{\sigma',\rem\phase'} \Leftrightarrow \sigma = \sigma'\land \forall i\colon \phase_i \rem\sqsubseteq\rem \phase'_i$, and
$q\rem\sqsubseteq\rem X\Leftrightarrow \forall q'\rem\rem\in\rem\rem X \colon q\rem\sqsubseteq\rem q'$
(for $q = \tuple{\sigma,\rem \phase}$).

Termination detection is implemented using Tarjan's SCC algorithm as in~\cite{valmari1}.
We chose not to obfuscate the search with
the rather intricate details of that algorithm.
Instead, we assume that there is a function \textsc{SCCRoot}
which identifies a unique \concept{root} state in each bottom SCC
composed solely of post-states. This state is then
made external on Line~\ref{l:ext} fulfilling the premise of \autoref{th:reduction}
($\forall \sigma \in \LL_i \colon \exists \sigma'\in\NN_i \colon \sigma\hookrightarrow^*_i \sigma' $).
Combined with \autoref{lem:preserves} this yields \autoref{th:alg}.

\begin{theorem}\label{th:alg}\vspace{-.2em}
\autoref{alg:rtrs} computes $\reach(\brcts)$ s.t.
$\reach(\cts)\models \Box\varphi \Longleftrightarrow
\reach(\brcts)\models \Box\varphi$.
\end{theorem}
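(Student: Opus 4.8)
The plan is to connect the algorithm's output to the reduced transaction system $\brcts$ from \autoref{th:reduction}, and then chain the two equivalences. The first and main task is to argue that $V_1 = \reach(\brcts)$ upon termination (the first \textbf{assert}), i.e., that \autoref{alg:rtrs} faithfully enumerates exactly the states reachable in $\brcts$ under the $\brtrans$ relation, with the one subtlety that the \textsc{SCCroot} call on Line~\ref{l:ext} reassigns the phase of certain post-states to $\NN$. I would first establish that the nested search correctly computes, for each external state $\tuple{\sigma,\NN^P}$ popped from $Q_1$ and each thread $i$, the set of states reachable by $\trtrans_i$-steps that remain internal (pre/post phases of thread $i$), stopping at external states — this is precisely what \textsc{Transaction} does via $Q_2$, and the $\sqsubseteq$-subsumption check at Line~\ref{l:sub2} only prunes an internal state when an already-seen state with phases at least as large (hence ``more external'', so strictly more permissive) is present, which by monotonicity of the dynamic movers (\autoref{lem:dlm}, \autoref{lem:drm}) and the phase order $\RR \sqsubset \LL \sqsubset \NN$ does not lose any reachable external continuation. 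Then the propagation at Line~\ref{l:propagate} feeds exactly the freshly discovered external states back to the outer search, so $V_1$ grows to the $\brtrans$-closure of $\tsint$.

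Second, I would handle the \textsc{SCCroot} mechanism: it chooses a unique root in each bottom SCC lying entirely within $\LL_i$ and marks it external. The point is that such an SCC is a set of post-phase states of thread $i$ from which $i$ can never reach an $\NN_i$ state by $\trtrans_i$ alone; by the overlap of \autoref{eq:2} and \autoref{eq:3} in \autoref{def:trs}, declaring that root state to be in $\NN_i$ is a legitimate choice of the transaction system (it is one of the allowed nondeterministic phase assignments). This establishes the termination premise of \autoref{th:reduction}, namely $\forall \sigma \in \LL_i \colon \exists \sigma' \in \NN_i \colon \sigma \hookrightarrow^*_i \sigma'$, because every $\LL_i$-state either eventually escapes to $\NN_i$ on its own or sits above such a bottom SCC whose root is now in $\NN_i$. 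I would also note the standard fact (cited from \cite{valmari1}, \cite{tarjan}) that this can be done on-the-fly in linear time, so the algorithm indeed terminates and $V_1$ is finite.

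Third, with $V_1 = \reach(\brcts)$ in hand, the theorem follows by combining \autoref{lem:preserves}, which gives $\reach(\cts) \models \Box\varphi \Leftrightarrow \reach(\cts') \models \Box\varphi$, with \autoref{th:reduction}, which gives $\reach(\cts') \models \Box\varphi \Leftrightarrow \reach(\brcts) \models \Box\varphi$ under exactly the termination premise just verified. Transitivity yields $\reach(\cts) \models \Box\varphi \Leftrightarrow \reach(\brcts) \models \Box\varphi$, and since the algorithm returns $V_1 = \reach(\brcts)$, checking $\sigma \in \varphi$ for all $\sigma \in V_1$ decides the invariant. The main obstacle I expect is the first step: rigorously proving that the two-level worklist search together with the $\sqsubseteq$-subsumption pruning computes precisely $\reach(\brcts)$ and nothing less — in particular that pruning an internal state subsumed by a larger-phase state never causes a reachable external state (or a $\varphi$-violating internal state) to be missed. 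This requires a careful invariant on $V_1 \cup V_2 \cup Q_1 \cup Q_2$ maintained across both loops, leaning on the monotonicity lemmas to show that a state with ``more external'' phases dominates one with ``more internal'' phases with respect to which external states and invariant violations are still reachable.
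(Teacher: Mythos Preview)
Your proposal is correct and follows the paper's approach: the paper's proof is essentially your Parts~2 and~3 --- it argues that the \textsc{SCCroot} mechanism (citing Valmari) establishes the termination premise $\forall \sigma \in \LL_i \colon \exists \sigma'\in\NN_i \colon \sigma\hookrightarrow^*_i \sigma'$, then chains \autoref{lem:preserves} with \autoref{th:reduction} to obtain the biconditional.

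Where you differ is in ambition: your Part~1 plans a rigorous argument that the two-level worklist search with $\sqsubseteq$-subsumption actually computes $V_1 = \reach(\brcts)$, whereas the paper does not prove this at all --- it is simply asserted in the pseudocode (Line~13) and taken for granted in the proof. So your ``main obstacle'' is one the paper sidesteps entirely. If you do carry out Part~1, note that the monotonicity lemmas (\autoref{lem:dlm}, \autoref{lem:drm}) concern movability under \emph{remote} steps, not under phase refinement per se; the soundness of the $\sqsubseteq$-pruning rests more directly on the fact that a state with a larger phase (closer to $\NN$) enables at least the same continuations in $\ttrans_i$ as one with a smaller phase, which follows from the overlap of \autoref{eq:2} and \autoref{eq:3} and the structure of the phase conditions in \autoref{def:trs}, rather than from the mover lemmas.
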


Finally, while the transaction system exponentially blows up the number of
syntactic states ($\neq$ reachable states) by adding local phase variables, the reduction completely hides this
complexity as \autoref{th:removal} shows.
Therefore, as soon as the reduction succeeds in removing a single state,
we have by definition that 
$\sizeof{\reach(\cts)} <  \lvert\reach(\brcts)\rvert$.
\autoref{th:removal} also allows us to simplify the algorithm
by storing transition system states $\states$ instead
of transaction system states
\tstates in $V_1$ and $Q_1$.
\begin{theorem}\label{th:removal}\vspace{-.8em}
Let $N\defn \cap_{i} \NN_i$.
We have $\sizeof{N} = \sizeof{S}$ and
$\reach(\brcts)\subseteq \reach(\cts)$.
\vspace{-.5em}
\end{theorem}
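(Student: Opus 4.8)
\emph{Proof idea.} The equality $\sizeof{N} = \sizeof{\states}$ is immediate: $N = \cap_i \NN_i$ is exactly the set of states $\tuple{\sigma, \NN^\procs}$ with $\sigma\in\states$, since $\NN^\procs$ is the unique phase array all of whose components equal $\NN$. Hence the projection $\pi\colon \tstates\to\states$, $\pi\tuple{\sigma,h}\defn\sigma$, restricts to a bijection $N\to\states$; this also justifies the simplification of storing $\states$-states instead of $\tstates$-states in $V_1$ and $Q_1$.

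For the inclusion I would proceed in two steps. First, I would show $\reach(\brcts)\subseteq N$ by induction on the length of a $\brtrans$-path from $\tsint$. The base case holds because $\tsint = \tuple{\sint,\NN^\procs}\in N$. For the inductive step, take any $q\brtrans_i q'$ with $q\in N$. From $\brtrans_i \defn \NN_i\lrestr(\trtrans_i\rrestr\overline{\NN_i})^*\,\trtrans_i\rrestr \NN_i$, the final right restriction to $\NN_i$ forces the $i$-th phase of $q'$ to be $\NN$; and since every $\trtrans_i$-step is a $\trp$-step, which by the clause $\forall j\neq i\colon h_j'=h_j$ in \autoref{def:trs} leaves all remote phases untouched, the remote phases of $q'$ coincide with those of $q$ and hence still equal $\NN$. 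Thus $q'\in\cap_k\NN_k=N$.

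Second, I would observe that $\pi$ sends $\brtrans$-paths to $\rightarrow^*$-paths of \cts: each $\brtrans_i$-transition is a finite nonempty composition of $\trp$-steps, and every $\trp$-step $\tuple{\tuple{\sigma,h},\alpha,\tuple{\sigma',h'}}\in\ttrans_i$ projects, again by \autoref{def:trs}, to a transition $\tuple{\sigma,\alpha,\sigma'}\in\trans_i\subseteq\trans$. Hence the $\pi$-image of any $\brtrans$-path from $\tsint$ is a $\rightarrow^*$-path from $\sint$, so $\pi(\reach(\brcts))\subseteq\reach(\cts)$. Combining this with $\reach(\brcts)\subseteq N$ and the bijection $N\cong\states$ gives $\reach(\brcts)\subseteq\reach(\cts)$ under the standing identification. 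I expect the only delicate point to be the phase bookkeeping along a single $\brtrans_i$-transition — checking that remote phases stay external throughout and that the skipped internal states never enter $\reach(\brcts)$ — but this is controlled entirely by the shape of $\trtrans_i$ and the $h_j'=h_j$ clause; in particular the proof needs neither the mover lemmas nor the correctness statement of \autoref{th:reduction}.
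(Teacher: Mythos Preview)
Your proposal is correct and follows essentially the same approach as the paper, whose proof is a one-line sketch (``By definition of the initial state $\sigma_0'$ in \autoref{def:trs}, the reduced transition relations $\trtrans$, $\brtrans$ in \autoref{th:reduction} and basic induction''). You have simply unpacked that induction in full, and you are additionally explicit about the type coercion between $\tstates$ and $\states$ via the bijection $N\cong\states$, which the paper leaves implicit.
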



\section{Comparison between TR and POR}
\label{sec:comparison}

Stubborn TR (STR) 
is dynamic in the same sense as stubborn POR,
allowing for a better comparison of the two.
To this end, we discuss various example types of systems
that either TR or POR excel at.
As a basis, consider a
completely independent system with $p$ threads of $n-1$ operations each.
Its state space has $n^p$ states.
TR can reduce a state space to $2^p$ states whereas POR yields $n * p$ states.
The question is however also which kinds of systems are realistic
and whether the reductions can be computed precisely and efficiently.



\paragraph{High parallelism vs Long sequences of local transitions}
POR has an advantage
when $p \gg n$ being able to yield exponential reductions.
Though e.g. thread-modular verification~\cite{Flanagan2002,Malkis2006}
may become more attractive in those cases.
Software verification often has to deal with many sequential actions benefitting STR,
especially when VM languages such as LLVM are used~\cite{vvt}.

\begin{wrapfigure}[5]{r}{1.5cm}\vspace{-2.5em}
\hspace{-.2cm}
\begin{tikzpicture}
   \tikzstyle{e}=[minimum width=1cm]
   \tikzstyle{every node}=[font=\small, node distance=1.5cm]

  \node (s1) {$l_0$};
  \node (s2) [above right of=s1,yshift=-.3cm] {$l_1$};
  \node (s2p) [above right of=s1,yshift=-.8cm] {$..$};
  \node (s3) [below right of=s1,yshift=.3cm] {$l_9$};
  \node (s3p) [below right of=s1,yshift=.8cm] {$..$};

  \path (s1) edge[->] node[sloped,pos=.48]{} (s2);
  \path (s1) edge[->,dashed] node[sloped,pos=.48]{} (s2p);
  \path (s1) edge[->,dashed] node[sloped,pos=.48]{} (s3p);
  \path (s1) edge[->] node[sloped,pos=.48]{} (s3);

  \path (s2) edge[bend left=-30,->] node[sloped,pos=.48]{} (s1);
  \path (s3) edge[bend left=30,->] node[sloped,pos=.48]{} (s1);
\end{tikzpicture}
\end{wrapfigure}
\paragraph{Non-determinism}
In the pre-phase, TR is able to individually reduce mutually non-deterministic
transitions of one thread due to \autoref{eq:1}, which contrary to \autoref{eq:2}
considers individual actions of a thread.
Consider the example on the right. It represents a system
with nine non-determinisitic steps in a loop.
Assume one of them never commutes, but the others commute to the right.
Stubborn TR is able to reduce all paths through the loop over only
the right-movers, even if they constantly yield new states (and interleavings).

\begin{wrapfigure}[15]{r}{4.6cm}\vspace{-2.5em}
\scalebox{.75}{\hspace{-1.7em}

\begin{tikzpicture}[node distance=1.3cm,font=\smaller]
\newcommand\xstate[3]{(\tuple{#1,#2},#3)}
\renewcommand\xstate[3]{\tuple{#3}}

\tikzset{snake/.style={
decoration={snake, 
    amplitude = .4mm,
    segment length = 2mm,
    post length=0.9mm},decorate}}

\tikzstyle{e}=[line width=2pt]

    \node (s0x) {};
    \node (s1) [below left	of=s0x] 	{};
    \node (s2) [below right of=s0x] 	{};
    \node (s3) [below left 	of=s1] 	{};
    \node (s4) [below right of=s1]	{};
    \node (s5) [below right of=s2] 	{};
    \node (s6) [below right	of=s3] 	{};
    \node (s7) [below right	of=s4] 	{};
    \node (s8) [below left	of=s7] 	{};

	\node (sa) [above right	of=s0x] 	{};
    \node (sb) [above right of=s2] 	{};
    \node (sc) [above right	of=s5] 	{};

	\node (sx1) [below right	of=sc] 	{};
    \node (sx2) [below right of=s5] 	{};
    \node (sx3) [below right	of=s7] 	{};
    \node (sx4) [below right	of=s8] 	{};

\def\TONE{}
\def\TTWO{}
\def\sone{\ccode{P(m);}}
\def\stwo{\ccode{P(m);}}

    \path (s0x) edge[->,e] node(t1')[midway,above,sloped] {\ccode{x=1;}}
    					node(t1')[midway,below,sloped] {$\TONE$}  (s1);
    \path (s0x) edge[->,e] node(t1')[midway,above,sloped] {\stwo}
    					node(t1')[midway,below,sloped] {$\TTWO$} (s2);
    \path (s1) edge[->,e] node(t1')[midway,above,sloped] {\ccode{V(m);}}
    					node(t1')[midway,below,sloped] {$\TONE$} (s3);
    \path (s1) edge[->] node(t1')[midway,above,sloped] {\stwo}
					    node(t1')[midway,below,sloped] {$\TTWO$} (s4);
    \path (s2) edge[->,e] node(t1')[midway,above,sloped] {\ccode{x=1;}}
					    node(t1')[midway,below,sloped] {$\TONE$} (s4);
    \path (s2) edge[->,e] node(t1')[midway,above,sloped] {\ccode{x=2;}}
					    node(t1')[midway,below,sloped] {$\TTWO$} (s5);
    \path (s4) edge[->] node(t1')[midway,above,sloped] {\ccode{V(m);}}
					    node(t1')[midway,below,sloped] {$\TONE$} (s6);
    \path (s4) edge[->,e] node(t1')[midway,above,sloped] {\ccode{x=2;}}
    					node(t1')[midway,below,sloped] {$\TTWO$} (s7);
    \path (s3) edge[->,e] node(t1')[midway,above,sloped] {\stwo}
					    node(t1')[midway,below,sloped] {$\TTWO$} (s6);
    \path (s5) edge[->,e] node(t1')[midway,above,sloped] {\ccode{x=1;}}
    					node(t1')[midway,below,sloped] {$\TONE$} (s7);
    \path (s7) edge[->,e] node(t1')[midway,above,sloped] {\ccode{V(m);}}
					    node(t1')[midway,below,sloped] {$\TONE$} (s8);
    \path (s6) edge[->,e] node(t1')[midway,above,sloped] {\ccode{x=2;}}
    					node(t1')[midway,below,sloped] {$\TTWO$} (s8);

    \path (sa) edge[->,e] node(t1')[midway,above,sloped] {\sone}
    					node(t1')[midway,below,sloped] {$\TONE$} (s0x);
    \path (sb) edge[->,e] node(t1')[midway,above,sloped] {\sone}
					    node(t1')[midway,below,sloped] {$\TONE$} (s2);
    \path (sc) edge[->] node(t1')[midway,above,sloped] {\sone}
    					node(t1')[midway,below,sloped] {$\TONE$} (s5);
    \path (sa) edge[->,e] node(t1')[midway,above,sloped] {\stwo}
    					node(t1')[midway,below,sloped] {$\TTWO$} (sb);
    \path (sb) edge[->,e] node(t1')[midway,above,sloped] {\ccode{x=2;}}
					    node(t1')[midway,below,sloped] {$\TTWO$} (sc);

    \path (sc) edge[->,e] node(t1')[midway,above,sloped] {\ccode{V(m);}}
    					node(t1')[midway,below,sloped] {$\TONE$} (sx1);
    \path (s5) edge[->] node(t1')[midway,above,sloped] {\ccode{V(m);}}
    					node(t1')[midway,below,sloped] {$\TONE$} (sx2);
    \path (s7) edge[->] node(t1')[midway,above,sloped] {\ccode{V(m);}}
    					node(t1')[midway,below,sloped] {$\TONE$} (sx3);
    \path (s8) edge[->,e] node(t1')[midway,above,sloped] {\ccode{V(m);}}
    					node(t1')[midway,below,sloped] {$\TONE$} (sx4);

    \path (sx1) edge[->,e] node(t1')[midway,above,sloped] {\ccode{l(m);}}
    					node(t1')[midway,below,sloped] {$\TONE$} (sx2);
    \path (sx2) edge[->,e] node(t1')[midway,above,sloped] {\ccode{x=1;}}
    					node(t1')[midway,below,sloped] {$\TONE$} (sx3);
    \path (sx3) edge[->,e] node(t1')[midway,above,sloped] {\ccode{V(m);}}
    					node(t1')[midway,below,sloped] {$\TONE$} (sx4);

    \path (sx1) edge[->,e,dashed,bend left=65] node(t1')[midway,below,sloped] {\sone\ccode{x=1;V(m);}} (sx4);
    \path (sa) edge[->,e,dashed,bend right=65] node(n1)[midway,above,sloped] {\sone\ccode{x=1;V(m);}} (s3);
    \path (sa) edge[->,e,dashed,bend right=-65] node(t1')[midway,above,sloped] {\stwo\ccode{x=2;V(m);}} (sx1);
    \path (s3) edge[->,e,dashed,bend right=65] node(t1')[midway,below,sloped] {\stwo\ccode{x=2;V(m);}} (sx4);

\end{tikzpicture}
}
\vspace{-2.5em}
\caption{State space of \ccode{P(m); x=1; V(m); $\|$ P(m);  x=2; V(m);}
		 and 
		POR (thick lines) and TR (dashed lines).
}
\label{f:locks}
\end{wrapfigure}
\paragraph{Left and right movers}
While stubborn POR can handle left-commutativity using additional restrictions,
STR can benefit from right-commutativity in the pre-phase and 
from left-commutativity in the post-phase.
E.g., P/V-semaphores are right/left-movers (see \autoref{sec:prelim}).
\autoref{f:locks} shows a system with ideal reduction using TR,
and none with stubborn set POR.

\autoref{tab:sync} provides various synchronization constructs and their movability.
Thread create \& join have not been classified before.

\begin{table}[t]
\caption{Movability of commonly used synchronization mechanisms}\label{tab:sync}
\smaller
\begin{tabular}{l|p{9.5cm}}
\toprule
\texttt{pthread\_create} & As this can be modeled with a mutex that is guarding the thread's code
and is initially set to locked, the \texttt{create}-call is an unlock and thus a left-mover.   \\
\texttt{pthread\_join} & Using locking similar to \texttt{create},
				\texttt{join} becomes a lock and thus a
				right-mover.   \\
Re-entrant locks & Right / left movers~\cite{qadeer-transactions}\\
Wait/notify/notifyAll & Can all three be split into right and left moving parts~\cite{qadeer-transactions}\\
\bottomrule
\end{tabular}\vspace{-1.5em}
\end{table}

\paragraph{Deadlocks}
POR preserves all deadlocks, even when irrelevant to the property.
TR does not preserve deadlocks at all,
potentially allowing for better reductions preserving invariants.
The following example deadlocks because of an invalid locking order. 
TR can still reduce the example to four states, creating maximal transactions.
On the other hand, POR must explore the deadlock.

{\centering
$\footnotesize
~~~~~~~\ccode{l(m1);l(m2); x=1; u(m1);u(m2); $\|$ l(m2);l(m1);  x=2; u(m1);u(m2);}
$}


\paragraph{Processes}
STR retains the process-based definition from its
ancestors~\cite{lipton}, while stubborn POR can go beyond
process boundaries to improve reductions and even supports
process algebras~\cite{ssalgebra,guardpor2}.
In early attempts to solve the open problem of 
a process-less STR definition, we observed that 
inclusion of all actions in a transaction
could cause the entire state space search to
move to the \textsc{SearchTransaction} function.

\paragraph{Tractability and heuristics}
The STR algorithm can fix the set of stubborn
transitions to those in the same thread
(see definitions of $M_\alpha^\star$).
This can be exploited in the deletion algorithm by fixing the relevant
transitions (see the incomplete minimization approach~\cite{deletion}).
If the algorithm returns a set with other transitions, then we know
that no transaction reduction is possible as the returned set is
subset-minimal~\cite[Th.~1]{guardpor2}.
The deletion algorithm runs in polynomial time
(in the order of $\sizeof{\actions}^4$~\cite{explosion}),
hence also stubborn TR also does (on a per-state basis).
Stubborn set POR, however, is \NP-complete as it
has to consider all subsets of actions.
Moreover, a small stubborn set is merely a heuristic
for optimal reductions~\cite{optimal} as discussed in \autoref{s:pss}.


\paragraph{Known unknowns}
We did not consider other properties such as full safety,  LTL and CTL.
For CTL, POR can no longer reduce to non-trivial subsets
because of the CTL proviso~\cite{gerth1995partial}
(see \cite{ssalgebra} for support of non-deterministic transitions,
like in stubborn TR).
TR for CTL is an open problem.

While TR can split visibility in enabling (in the pre-phase) and
disabling (in the post-phase), POR must consider both combined.
POR moreover must compute the ignoring proviso over the entire state space
while TR only needs to consider post-phases and thread-local steps.

The ignoring proviso~\cite{valmari-92,ignoring,openset}
in POR tightly couples the possible reductions per state
to the role the state plays in the entire reachability graph.
This lack of locality adds an extra obstacle to the parallelization of the
model checking procedure. Early results make compromises in the obtained
reductions~\cite{parpor}. Recent results show that reductions do not
have to be affected negatively even with high amounts of
parallelism~\cite{cycleproviso}, however these results have not
yet been achieved for distributed systems.
TR reduction on the other hand, offers plenty of parallellization
opportunities, as each state in the out search can be handed off to
a separate process.


\newcommand\spin{\textsc{SPIN}\xspace}
\newcommand\ltsmin{\textsc{LTSmin}\xspace}

\section{Experiments}
\label{sec:experiments}

We implemented stubborn transaction reduction (STR) of \autoref{alg:rtrs} in the open source
model checker \ltsmin\footnote{\url{http://fmt.cs.utwente.nl/tools/ltsmin/}}~\cite{ltsmin},
using a modified deletion algorithm to establish optimal stubborn sets
in polynomial time (as discussed in \autoref{sec:comparison}).
The implementation can be found on GitHub.\footnote{\url{https://github.com/alaarman/ltsmin/commits/tr}}
\ltsmin
has a front-end for \textsc{promela} models, which is on par with
the \spin model checker~\cite{holzmann-97} performance-wise~\cite{spins}.
Unlike SPIN, \textsc{LTSmin} does not implement
dynamic commutativity specifically for queues~\cite{holzmann-peled},
but because it splits queue actions into a separate action for each
cell~\cite{spins}, a similar result is achieved by virtue of the
stubborn set condition \textbf{\ref{i:d2}} in \autoref{s:pss}.
This benefits both its POR and STR implementation.

\begin{wraptable}{r}{6cm}
\vspace{-1em}
\caption{Models and their verification times in \ltsmin. Time in sec. and memory use~in~MB.
State/transition counts are the same in both \ltsmin and \spin.}
\label{tab:models}
\setlength{\tabcolsep}{.5ex} 
{
\scriptsize
\begin{tabular}{l|rr|rr}
\toprule
&	\multicolumn{2}{c|}{\textsc{SPIN/LTSmin}}
&	\multicolumn{2}{c}{\ltsmin}
\\
&	states
&	transitions
&	time
&	mem.
\\	\midrule
\texttt{Peterson5}			& 829909270 & 3788955584 	& 4201. 	& 6556.  	\\
\texttt{GARP}			& 48363145 	& 247135869 	& 88.34 	& 369.8 	\\
\texttt{i-Prot.2}		& 13168183 	& 44202271 	& 22.99 	&  102.8  	\\
\texttt{i-Prot.0}		& 9798465 	& 45932747 	& 19.58 	& 75.2 	\\
\texttt{Peterson4}		& 3624214 	& 13150952 	& 7.36 	& 28.5 \\
\texttt{BRP}			& 2812740 	& 6166206 	& 4.59 	& 26.4 	\\
\texttt{MSQ}			& 994819	& 3198531 	& 4.41 	& 12.1  	\\
\texttt{i-Prot.3}		& 327358 	& 978579 	& 0.79 	& 2.8 	\\
\texttt{i-Prot.4}		& 78977 	& 169177 	& 0.19 	& 0.8 	\\
\texttt{Small1}			& 36970 	& 163058 	& 0.14 	& 0.3 	\\
\texttt{X.509}			& 9028 	& 35999 	& 0.03 	& 0.1 	\\
\texttt{Small2}			& 7496 	& 32276 	& 0.08 	& 0.1 	\\
\texttt{SMCS}			& 2909 	& 10627 	& 0.01 	& 0.1 	\\
\bottomrule
\end{tabular}
\vspace{-2em}
}
\end{wraptable}
\begin{table*}[b!]\vspace{-4ex}
\caption{Reduction runs of TR, Stubborn TR (STR) and Stubborn POR (SPOR).
Reductions of states $|\states|$  and transitions $|\trans|$ are given in
percentages (reduced state space / original state space),
runtimes in sec. and memory use in MB.
The lowest reductions (in number of states) and the runtimes are highlighted in bold.}
\label{tab:tr}
\setlength{\tabcolsep}{.35ex} 
\smaller
{
\begin{tabular}{l|rrrr|rrrr|rrrr|rrrr}
\toprule
&	\multicolumn{4}{c|}{TR (\ltsmin)} 
&	\multicolumn{4}{c|}{\textbf{STR} (\ltsmin)} 	
&	\multicolumn{4}{c|}{SPOR (\ltsmin)} 	
&	\multicolumn{4}{c}{Ampe set (\spin)} 
\\
&	$|\states|$
&	$|\trans|$
&	time
&	mem

&	$|\states|$
&	$|\trans|$
&	time
&	mem

&	$|\states|$
&	$|\trans|$
&	time
&	m.

&	$|\states|$
&	$|\trans|$
&	time
&	mem
\\	\midrule

\texttt{Peterson5}&	0.5&	0.3&	\bf6.11&	 33.0 &	\bf 0.4&	0.3&	74.01&	 29.5 &	3.1&	0.9&	316.10&	 209.8 &	5.2&	1.9&	42.30 &	2463. \\
\texttt{GARP}&	100&	100&	266.21&	 369.8 &	\bf1.4&	1.5&	776.53&	5.2&	3.6&	1.5&	19.83&	 13.5 &	7.6&	3.7&	\bf6.27&	289.1\\
\texttt{i-Prot.2}&	\bf2.1&	2.4&	\bf3.46&	 2.2 &	\bf2.1&	2.4&	4.87&	 2.2 &	20.2&	11.9&	13.32&	 21.7 &	26.1&	17.6&	4.33 &	246.9 \\
\texttt{i-Prot.0}&	100&	100&	56.71&	 75.2 &	\bf12.8&	12.5&	148.78&	 9.7 &	32.1&	17.2&	214.93&	 24.3 &	15.7&	10.5&	\bf2.56&	132.2\\
\texttt{Peterson4}&	\bf1.3&	1.0&	 0.36 &	 0.5 &	\bf1.3&	1.0&	 0.85 &	 0.5 &	7.3&	2.7&	4.24&	 2.4 &	14.7&	6.8&	\bf0.24&	28.9\\
\texttt{BRP}&	100&	100&	9.59&	 26.4 &	47.6&	36.9&	6.38&	 12.6 &	100&	100&	90.31&	 26.4 &	\bf9.2&	6.0&	\bf0.18&	22.2\\
\texttt{MSQ}&	66.0&	65.0&	5.5&	 8.2 &	\bf22.9&	21.5&	14.90&	 3.0 &	52.1&	29.1&	12.14&	 6.5 &	80.4&	46.6&	\bf1.03 &	200.9 \\
\texttt{i-Prot.3}&	8.0&	7.4&	 0.19&	 0.2 &	\bf8.0&	7.4&	0.24&	 0.2 &	20.7&	10.4&	 0.94&	 0.6 &	27.0&	16.5&	\bf0.06&	5.8\\
\texttt{i-Prot.4}&	25.1&	27.2&	 0.14&	 0.2 &	\bf25.0&	27.1&	0.18&	 0.2 &	45.2&	31.5&	 0.54&	 0.4 &	50.4&	37.1&	\bf0.03&	2.8\\
\texttt{Small1}&	8.9&	18.0&	 0.03&	n/a&	\bf6.7&	13.6&	0.07&	n/a&	31.2&	17.7&	 0.18&	 0.1 &	48.4&	45.1&	\bf0.01&	0.9\\
\texttt{X.509}&	93.8&	94.1&	 0.07&	 0.1 &	19.3&	16.7&	0.06&	n/a&	\bf7.8&	3.7&	 0.03&	n/a&	67.5&	34.3&	\bf0.01&	1.1\\
\texttt{Small2}&	11.6&	21.0&	 0.01&	n/a&	\bf8.7&	15.8&	0.01&	n/a&	35.0&	19.8&	 0.04&	n/a&	48.3&	43.8&	\bf0.01&	0.4\\
\texttt{SMCS}&	100&	100&	 0.05&	 0.1 &	26.1&	19.6&	0.09&	n/a&	\bf12.5&	5.3&	 0.03&	n/a&	41.1&	19.6&	\bf0.01&	0.7\\


\bottomrule
\end{tabular}
}
\end{table*}

We compare STR against (static) TR from \autoref{sec:prelim}.
We also compare STR against the stubborn set POR  in \ltsmin, which was
shown to consistently outperform SPIN's ample set~\cite{holzmann-peled} implementation
in terms of
reductions, but with worse runtimes due to the more elaborate stubborn set
algorithms (a factor 2--4)~\cite{guardpor2}.
(We cannot compare with \cite{vmcai} due to the different input
formats of VVT~\cite{vvt} and \textsc{LTSmin}.)
\autoref{tab:models} shows the
models that we considered and their normal (unreduced) verification times in \ltsmin.
We took all models from \cite{guardpor2} that contained an assertion.
The inputs include mutual exclusion algorithms (\texttt{peterson}),
protocol implementations (\texttt{i-protocol}, \texttt{BRP, GARP, X509}),
a lockless queue (\texttt{MSQ}) and controllers (\texttt{SMCS}, \texttt{SMALL1}, \texttt{SMALL2}).

\ltsmin runs~with~STR were configured according to the command line:\\
\texttt{\scriptsize prom2lts-mc --por=str --timeout=3600 -n --action=assert m.spins}\\
The option \texttt{--por=tr} enables the static TR instead.
We also run all models in \spin in order to compare against the ample
set's performance. \spin runs were configured according to the following command lines:\\
\texttt{\scriptsize
cc -O3 -DNOFAIR -DREDUCE -DNOBOUNDCHECK -DNOCOLLAPSE 
-DSAFETY -DMEMLIM=100000 -o pan pan.c\\
./pan -m10000000 -c0 -n -w20 
}

\autoref{tab:tr} shows the benchmark results.
We observe that STR often surpasses POR (stubborn and ample sets) in terms of reductions.
Its runtimes however are inferior to those of the ample set in SPIN.
This is likely because we use the precise deletion algorithm, which decides the optimal
reduction for STR: STR is the only algorithm of the four that does not use heuristics.
The higher runtimes of STR are often compensated by the better reductions it obtains.

Only three models demonstrate that POR can yield better reductions
(\texttt{BRP}, \texttt{smcs} and \texttt{X.509}).
This is perhaps not surprising as these models do not have massive parallelism
(see \autoref{sec:comparison}). It is however interesting to note that
\texttt{GARP} contains seven threads.
We attribute the good reductions of STR mostly to its ability to skip internal states.
\spin's ample set only reduces the \texttt{BRP} better than \ltsmin's stubborn
POR and STR. 
In this case, we found that \ltsmin too eagerly identifies half of the actions
of both models as visible.

\paragraph{Validation}
Validation of TR is harder than of POR. For POR, we usually count deadlocks,
as all are preserved, but TR might actually prune deadlocks and error states
(while preserving the invariant as per \autoref{th:alg}).
We therefore tested correctness of our implementation by implementing methods that check
the validity of the returned semi-sturbborn sets. Additionally, we maintained counters
for the length of the returned transactions and
inspected the inputs to confirm validity of the longest transactions.



\section{Related Work} 
\label{sec:related}

Lipton's reduction was refined multiple times
\cite{lamport-lipton,gribomont,Doeppner:1977:PPC:512950.512965,lamport-tla,Stoller2003}.
Flanagan et al.~\cite{Flanagan2002,Flanagan:2003:TA:640136.604176} and Qadeer et al.~\cite{qadeer-atomicity,qadeer-java,qadeer-transactions}
have most recently developed transactions and found various applications.
The reduction theorem used to prove the theorems in the current paper
comes from our previous work~\cite{vmcai}, which in turn is
a generalized version of \cite{qadeer-transactions}. 
Our generalization allows the direct support of dynamic transactions as already
demonstrated for symbolic model checking with IC3 in~\cite{vmcai}.
Despite a weaker theorem, Qadeer and Flanagan~\cite{qadeer-transactions} 
can also dynamically grow transactions by doing iterative refinement 
over the state space exploration.
This contrasts our approach, which instead allows on-the-fly adaptation of
movability (within a single exploration).
Moreover, \cite{qadeer-transactions} bases dynamic behavior
on exclusive access to variables, whereas our technique can handle any kind of
dependency captured by the general stubborn set POR relations.

Cartesian POR~\cite{cartesian} is a form of
Lipton reduction that builds transactions during the exploration,
but does not exploit left/right commutativity.
The leap set method~\cite{schoot} treats disjoint
reduced sets in the same state as truly concurrent and executes
them as such: The product of the different disjoint sets is 
executed from the state, which entails that sequences of
actions are executed from the state. This is where the similarity with the
TR ends, because in TR the sequences are formed by
sequential actions, whereas in leap sets they consist of concurrent actions,
e.g., actions from different processes.
Recently, trace theory has been generalized to include `steps' by
Ryszard et al.~\cite{Janicki2016}. We believe that this work could form a basis
to study leap sets and TR in more detail.


Various classical POR works were mentioned, e.g.~\cite{parle89,godefroid,katz-peled}.
How `persistent sets'~\cite{godefroid}/`ample sets'~\cite{katz-peled}
relate to stubborn set POR is explained in~\cite[Sec.~4]{intuition}.
Sleep sets~\cite{godefroid-wolper-93} form an orthogonal approach, but in isolation
only reduce the number of transitions.
Dwyer et al.~\cite{dwyer2004exploiting} propose dynamic techniques for object-oriented programs.
Completely dynamic approaches exist~\cite{flanagan2005dynamic,Kastenberg2008}.
Recently, even optimal solutions were found~\cite{abdullah,sousa,contextpor}.
These approaches are typically stateless however, although still
succeed in pruning converging paths sometimes (e.g.,~\cite{sousa}).
Others aim at making dependency more dynamic~\cite{godefroid-pirottin,holzmann-peled,collapses}.

Symbolic POR can be more static for reasons discussed in Footnote~\ref{fn:symbolic}, e.g.,~\cite{alur1997partial}.
Therefore, Grumberg et al.~\cite{Grumberg:2005:PUM:1040305.1040316} present
underapproximation-widening, which iteratively refines an under-approximated
encoding of the system. In their implementation, interleavings are constrained
to achieve the under-approximation. Because refinement is done based on verification
proofs, irrelevant interleavings will never be considered.
Other relevant dynamic approaches are peephole and monotonic POR by
Wang et al.~\cite{peephole,kahlon-wang-gupta}. 
Like sleep sets~\cite{godefroid}, however, these methods only reduce
the number of transitions. While a reducing transitions can speed up
symbolic approaches by constraining the transition relation,
it is not useful for enumerative model checking, which
is strongly limited by the amount of unique states that need to be stored in memory.

Kahlon et al.~\cite{Kahlon2006} do not implement transactions, but
encode POR for symbolic model checking using SAT. The ``sensitivity'' to locks of their algorithm
can be captured in traditional stubborn sets as well by viewing locks as normal ``objects'' (variables)
with guards, resulting in the subsumption of the ``might-be-the-first-to-interfere-modulo-lock-acquisition''
relation~\cite{Kahlon2006} by the ``might-be-the-first-to-interfere'' relation~\cite{Kahlon2006},
originally from~\cite{godefroid}.

Elmas et al.~\cite{ElmasQT09} propose
dynamic reductions for type systems, where the 
invariant is used to weaken the mover definition.
They also support both right and left movers, but do 
automated theorem proving instead of model checking.






\section{Conclusion} 
\label{sec:conclusion}
We presented a more dynamic version of transaction reduction (TR) based
on techniques from stubborn set POR.
We analyzed several scenarios for which either of the two approaches
has an advantage and also experimentally compared both techniques.
We conclude that TR is a valuable alternative to
POR at least for systems with a relatively low amount of parallelism. 

Both in theory and practice, TR showed advantages to POR, 
but vice versa as well.
Most strikingly, TR is able to exploit various synchronization mechanisms 
in typical parallel programs because of their left and right commutativity.
While not preserving deadlocks, its reductions can benefit from omitting them.
These observations are supported by experiments that show
better reductions than a comparably dynamic POR
approach for systems with up to 7 threads.
We observe that the combination POR and TR is an open problem.



\bibliographystyle{plain}
\bibliography{lit}

\clearpage
 
\appendix

\section{Correctness Proofs}
\label{app:proofs}

The current appendix contains the proofs for the lemmas and theorems in the paper.
For clarity, lemmas and theorems are repeated with the same numbering as in the
paper.

In \autoref{sec:str}, we defined different semi-stubborn sets, i.e.,
$\sst^\shortleftarrow_\sigma(B)$,  $\sst^\leftrightarrow_\sigma(B)$ and  $\sst^\shortrightarrow_\sigma(B)$.
We first provide a proof or \autoref{th:sst}.
%
%
%

\setcounter{theorem}{1}

\begin{proof}

Let $B$, $\beta$, $\sigma$ and $\sigma'$ be such that they satisfy
the premise of the theorem and $\alpha\in B$. We distinguish two cases:

If $\alpha\in \dis(\sigma)$, then 
let $\alpha,E$ be such that $E\in \nes_\sigma(\alpha)$ and $E \subseteq B$.
\textbf{\ref{i:d2}} remains valid for it in~$\sigma'$, since
$\beta$ cannot enable $\alpha$ because \textbf{\ref{i:d2}} holds in $\sigma$
and, by definition of NESs, we have that $E\in\nes_{\sigma'}(\alpha)$.

If $\alpha\in \en(\sigma)$, then either
$\alpha\in \en(\sigma')$ or $\alpha\in \dis(\sigma')$.
In the former case, the conclusion of the theorem is satisfied trivially,
as \textbf{\ref{i:d1}} also holds in $\sigma'$.
For the latter case, i.e. $\alpha\in \dis(\sigma')$, we consider
each $\star \in \set{\leftarrow,\rightarrow,\leftrightarrow}$ separately.

\begin{description}[labelsep=8pt,labelindent=0\parindent,itemindent=10pt,leftmargin=0pt,listparindent=4em]
\item[$\star = \leftrightarrow$:]
The proof is concluded, as the definition of strong commutativity $\comm$,
e.g., as the deterministic case illustrated by \autoref{eq:strong},
ensures that if $\beta$ disables $\alpha$, then the conclusion is not met.
(Note that this also concludes the proof of \autoref{th:valmari}.)

\item[$\star = \rightarrow$:]
The proof is concluded, because the additional `provided' condition
that $\en(\sigma) \cap B \subseteq \en(\sigma') \cap B$ ensures
 that $\beta$ cannot disable $\alpha$.

\item[$\star = \leftarrow$:]
From \textbf{\ref{i:d1}}, we have $\alpha \lcomm \gamma$ for all
$\gamma \notin B$. 
Since we also have $\gamma \rcomm \alpha$, no $\gamma\notin B$ ever (re-)enables $\alpha$
by definition of right commutativity, as discussed in \autoref{sec:prelim}. 
Therefore, \textbf{\ref{i:d2}} holds in $\sigma'$
(there must be some $E\in \nes_{\sigma'}(\alpha)$ such that $E \cap \overline B = \emptyset$, hence $E\subseteq B$),
yielding again the conclusion of the theorem.
\end{description}
These three cases conclude the proof. \qed
\end{proof}

Before proving the monotonicity lemmata,
we recall the definition of dynamic movers and \autoref{def:trs}:
\setcounter{equation}{2}

\setcounter{lemma}{0}

\begin{proof}
Assume the premise:
$\lmv_i(\sigma_1)$ with $i\neq j$ and $\sigma_1\tr{\beta}{j} \sigma_2$. We derive the conclusion.

Let $B$ be such that
$\sst_{\sigma_1}^\shortleftarrow(B)$ and
$B\cap \en(\sigma_1) = \actions_i \cap \en(\sigma_1)$.
As $j\neq i$, we may apply \autoref{th:sst} to find that
$B$ is also a valid semi-$\shortleftarrow$-stubborn set
in~$\sigma_2$, i.e. $\sst_{\sigma_2}^\shortleftarrow(B)$.
Moreover, $\beta$ cannot enable any $\gamma\in B\cap \dis(\sigma_1)$ by \textbf{\ref{i:d1}},
hence $B\cap \en(\sigma_2) = \actions_i \cap \en(\sigma_2)$.
That together with the semi-$\shortleftarrow$-stubbornness of $B$ in $\sigma_2$, implies that $\lmv_i(\sigma_2)$.
\qed
\end{proof}

\begin{proof}
Assume the premise: $\rmv_{i}(\sigma_1,\alpha,\sigma_2)$ for $\alpha\in\actions_i$
and $\sigma_1\tr{\alpha}{i} \sigma_2 \tr{\beta}{j} \sigma_3$ for $j\neq i$.
Let $B \subseteq \actions$ satisfy \autoref{eq:exclude}, i.e.:
$\sst_{\sigma_1}^\shortrightarrow(B),~\alpha\in B$, 
$B\cap \en(\sigma_2) \subseteq \actions_i$, and
$B\cap \en(\sigma_1) = \set{\alpha}$.
We derive the conclusion, i.e.: $\exists \sigma_1 \tr{\beta}{j}\sigma_4$ such that
$\sst_{\sigma_4}^\shortrightarrow(B),~\alpha\in B$, 
$B\cap \en(\sigma_3) \subseteq \actions_i$, and
$B\cap \en(\sigma_4) = \set{\alpha}$.

First we show that $\exists \sigma_1 \tr{\beta}{j}\sigma_4$ and
 $B \cap \en(\sigma_4) = \set{\alpha}$.
As $\beta\in\en(\sigma_2)$, we obtain $\beta\notin B$ (since $\beta\notin \actions_i$).
Since therefore $\alpha$ right-commutes with $\beta$ by the contraposition of
\textbf{\ref{i:d1}}, we obtain the commuting path $\sigma_1\tr{\beta}j \sigma_4\tr{\alpha}i \sigma_3$.
Assume $\exists \gamma \in B \cap \en(\sigma_4) \setminus \set{\alpha}$.
Action $\beta$ must have enabled $\gamma$, otherwise $\gamma \in \en(\sigma_1)$,
contradicting our assumption that $B\cap \en(\sigma_1) = \set\alpha$.
Now, if $\sigma_1\tr{\beta}j \sigma_4$ enables~$\gamma$,
by \textbf{\ref{i:d2}}, also $\beta\in B$, again contradicting the assumption.
Therefore, we have  $B \cap \en(\sigma_4) = \set{\alpha}$.

\autoref{th:sst} tells us that $B$ with $\sst_{\sigma_1}^\shortrightarrow(B)$
is also semi-stubborn in $\sigma_4$, i.e. $\sst^\shortrightarrow_{\sigma_4}(B)$
(\autoref{th:sst}'s additional condition that
$\en(\sigma_1) \cap B \subseteq \en(\sigma_4) \cap B$
is met because $\en(\sigma_1) \cap B = \en(\sigma_4) \cap B = \set\alpha$ 
 as shown above).

We now show that $B \cap \en(\sigma_3) \subseteq \actions_i$ also holds.
Assume $\exists \gamma \in B \cap \en(\sigma_3) \setminus \actions_i$.
Action $\beta$ must have enabled $\gamma$, otherwise $\gamma \in \en(\sigma_2)$,
contradicting our assumption that $B\cap \en(\sigma_2) \subseteq \actions_i$.
However, if $\sigma_2\tr{\beta}j \sigma_3$ enables $\gamma$,
by \textbf{\ref{i:d2}}, also $\beta\in B$, again contradicting our assumptions.
Therefore, we have $B \cap \en(\sigma_3) \subseteq \actions_i$.

The above shows that
%
$\rmv_i(\sigma_4,\alpha,\sigma_3) $.
\qed
\end{proof}


Recalling \autoref{def:trs}, we see that proving its preservation property is easy:

\setcounter{equation}{4}
\setcounter{definition}{1}
\begin{definition}[Transaction system]\label{def:trs}
Let $H \defn \set{\NN, \RR, \LL} ^ P$ be an array of local phases.
The transaction system is CTS $\cts'\defn \tuple{\tstates, \ttrans, \actions, \tsint}$ such that:
\vspace{-2ex}

\noindent

\end{definition}

\begin{lemma}\label{lem:preserves}
\autoref{def:trs} preserves invariants:
$\reach(\cts) \models\Box\varphi \Leftrightarrow\reach(\cts')\models \Box\varphi$.
\end{lemma}
\begin{proof}
The definition ensures the bisimulation: $\set{\tuple{\sigma, \tuple{\sigma,h}}\in \states\times \tstates}$.
\qed
\end{proof}

Towards proving \autoref{th:reduction}, we first recall the main theorem from
\cite{vmcai}.
\autoref{th:vmcai} requires one
bisimulation $\cong_i$ for each thread $i$ 
and a weakened definition of commutativity \concept{up to bisimulation}.
We recall these definitions first from~\cite{vmcai}.

We now formally define the notion of \concept{thread bisimulation} required for the reduction, as well as commutativity up to bisimilarity.


\begin{definition}[thread bisimulation]\label{def:bisim}
An equivalence relation $R$ on the states of a CTS $\tuple{\states, T, \actions, \sigma_0}$
is a thread bisimulation iff 

\centering
\begin{tikzpicture}

   \tikzstyle{e}=[minimum width=1cm]
   \tikzstyle{every node}=[font=\small, node distance=1.2cm]

  \node (s1) {$\sigma$};
  \node (s2) [below of=s1] {$\sigma'$};
  \node (s3) [right of=s1] {$\sigma_1$};

  \path (s1) edge[-] node[pos=.5,sloped,above] (nn) {$R$} (s2);
  \path (s1) -- node(m)[midway,sloped]{$\to_i$} (s3);


  \node (s1p) [gray,right of=s3,xshift=1cm] {$\sigma$};
  \node (s2p) [below of=s1p] {$\sigma'$};
  \node (s3p) [right of=s1p] {$\sigma_1$};

  \path (s1p) edge[-] node[gray,pos=.5,sloped,above] (nnn) {$R$} (s2p);
  \path (s1p) -- node(m)[gray,midway,sloped]{$\to_i$} (s3p);

  \node (s4p) [right of=s2p] {$\sigma'_1$};
  \path (s2p) -- node [midway,sloped]{$\to_i$} (s4p);
  \path (s3p) edge[-] node[pos=.5,sloped,above]{$R$} (s4p);

  \node (n) [left of=nnn] {$\implies \exists \sigma_1'\colon$};

  \node (n) [left of=nn] {$\forall \sigma,\sigma',\sigma_1,i\colon$};
  
\end{tikzpicture}
\end{definition}

Standard bisimulation is an equivalence relation $R$ which satisfies the property 
from \autoref{def:bisim} when the indexes $i$ of the transitions are removed. 
Hence, in a thread bisimulation, in contrast to standard bisimulation,
the transitions performed by  thread $i$ will be matched
by transitions performed by the same thread~$i$.
As we only make use of thread bisimulations, we will often refer to them simply as
bisimulations.


We can lift these bisimulations to sets of threads, by taking the
equivalence closure, e.g.
$\cong_Z$ being the transitive closure of the union of all $\cong_i$
for $i \in Z$. Note that $\cong_i \Leftrightarrow \cong_{\set i}$.
With this we can also refine commutativity as follows.

\begin{definition}[commutativity up to bisimulation] \label{def:comm-bisim}
Let $R$ be a thread bisimulation on a CTS $\tuple{\states, T, \actions, \sigma_0}$.
The right and left commutativity up to $R$ of the transition relation $\to_i$
with $\to_j$, notation
$\to_i\, \,\,\rcomm_{R}\,\, \to_j$ /$\to_i\, \,\,\lcomm_{R}\,\, \to_j$ are defined as follows.
\begin{IEEEeqnarray}{lCllr}
\tr{}{i}\,\,\rcomm_R\,\, \tr{}{j} &\defn &\tr{}{i} \circ \tr{}{j} \circ R
	&\,\,\subseteq\,\, \tr{}{j} \circ \tr{}{i} \circ R\phantom{XXX}
	&\text{~($\rcomm$ up to $R$)}\nonumber\\*
\tr{}{i}\,\,\lcomm_R\,\, \tr{}{j} &\defn &\tr{}{i} \circ \tr{}{j} \circ R
	&\,\,\supseteq\,\, \tr{}{j} \circ \tr{}{i} \circ R
	&\text{~($\lcomm$ up to $R$)}\nonumber
\end{IEEEeqnarray}
\noindent
Illustratively: 

\noindent
$\to_i\, \rcomm_{R} \to_j\,\,\,\, \Longleftrightarrow$\hspace{10em}
$\to_i\, \lcomm_{R} \to_j\,\,\,\, \Longleftrightarrow$\\
\begin{tikzpicture}
   \tikzstyle{e}=[minimum width=1cm]
   \tikzstyle{every node}=[font=\small, node distance=.45cm]

  \node (s1) {$\sigma_1$};
  \node (s2) [node distance=1.2cm,below of=s1] {$\sigma_2$};
  \node (s3) [right of=s2, xshift=.6cm] {$\sigma_3$};
  \path (s2) -- node[pos=.45]{$\to_j$} (s3);
  \path (s1) -- node(m)[midway,sloped]{$\to_i$} (s2);

  \node (s1n) [xshift=2.5cm,right of=s1] {$\sigma_1$};
  \node (s2n) [gray,node distance=1.2cm,below of=s1n] {$\sigma_2$};
  \node (s3n) [gray,right of=s2n, xshift=.6cm] {$\sigma_3$};
  \path (s2n) -- node[gray,pos=.45]{$\to_j$} (s3n);
  \path (s1n) -- node(m)[gray,midway,sloped]{$\to_i$} (s2n);

  \node (s0n) [left of=m, xshift=-.4cm]  {$\implies \exists \sigma_3',\sigma_4\colon$};

  \node (s4n) [xshift=.9cm,right of=s1n] {$\sigma_4$};
  \path (s1n) -- node [midway,sloped]{$\to_j$} (s4n);

  \node (s3pn) [node distance=1.1cm,right of=s3n] {$\sigma_3'$};
  \path (s4n) -- node[midway,sloped]{$\to_i$} (s3pn);

  \path (s3pn) -- node(AA)[sloped,pos=.1]{} (s3n);

  \node (s3pn) [node distance=1em,below of=AA,xshift=-.3cm] {$\tuple{\sigma_3,\sigma_3'}\in R$};
  
\end{tikzpicture}
~~
\begin{tikzpicture}

   \tikzstyle{e}=[minimum width=1cm]
   \tikzstyle{every node}=[font=\small, node distance=.45cm]

  \node (s1) {$\sigma_1$};
  \node (s2) [right of=s1, xshift=.6cm] {$\sigma_2$};
  \node (s3) [node distance=1.2cm,below of=s2, xshift=.9cm] {$\sigma_3$};
  \path (s2) -- node(m)[midway,sloped]{$\to_j$} (s3);
  \path (s1) -- node[midway,sloped]{$\to_i$} (s2);

  \node (s1n) [xshift=2.2cm,right of=s2] {$\sigma_1$};
  \node (s2n) [gray,right of=s1n, xshift=.6cm] {$\sigma_2$};
  \node (s3n) [gray,node distance=1.2cm,below of=s2n, xshift=.9cm] {$\sigma_3$};
  \path (s2n) -- node[gray,midway,sloped]{$\to_j$} (s3n);
  \path (s1n) -- node[gray,midway,sloped]{$\to_i$} (s2n);

  \node (s4n) [node distance=1.2cm,below of=s1n] {$\sigma_4$};
  \path (s1n) -- node(m) [midway,sloped]{$\to_j$} (s4n);

  \node (s3pn) [node distance=.9cm,right of=s4n] {$\sigma_3'$};
  \path (s4n) -- node[midway,sloped]{$\to_i$} (s3pn);

  \node (s0n) [left of=m, xshift=-.4cm]  {$\implies \exists \sigma_3',\sigma_4\colon$};
  
  \path (s3pn) -- node(AA)[sloped,pos=.1]{} (s3n);

  \node (s3pn) [node distance=1em,below of=AA] {$\tuple{\sigma_3,\sigma_3'}\in R$};
  
\end{tikzpicture}
%
%
%
%
%
\end{definition}


We write $\comm_Z$ for $\comm_{\cong_Z}$.

Using these definitions, \autoref{th:vmcai} provides an axiomatization
of the properties required for reducing the CTS using dynamic TR.
The theorem is similar to the reduction theorem in~\cite{vmcai},
where it is explained in detail. A proof of correctness is
provided in \cite{arxiv}.\footnote{The version in~\cite{arxiv}
does not include \autoref{i:vispre} and \autoref{i:vispost}.
To reason over invariant violations,
it instead distinguishes separate error states $\textsf{Err}_i\subseteq \NN_i$.
Using the path provided by \cite[Th.~2]{arxiv}, it is straightforward to show
that if a bad state $\overline\varphi$ is reachable in the complete system,
then so is one reachable in the reduced system. See also the
explanation of \textbf{L4} at the end of \autoref{sec:prelim}. 
}
Most of the constraints in its premise mirror the constraints
\textbf{L1--L4} provided in \autoref{sec:prelim}.
The commutativity condition
however is weakened to allow commutativity up to
bisimulation. Further conditions constrain the
phases of the transaction system with respect to
the newly added thread bisimulations.

\setcounter{theorem}{3}
{
\def\NN{N}
\def\LL{L}
\def\RR{R}
\def\varphi{Y}
\def\states{X}
\begin{theorem}[Reduction]
\label{th:vmcai}
\noindent
Let $\tuple{\states, T, \actions, \sigma_0}$
be a concurrent transition system, $\varphi\subseteq \states$
and $\to_i \defn \set{\tuple{\sigma,\sigma'} \mid \tuple{\sigma,\alpha,\sigma'} \in T_i}$
(as usual).
For each thread $i$, there exists a thread bisimulation relation $\cong_i$.
For all $i, j\neq i$ the following holds:

\begin{enumerate}
\setcounter{enumi}{0}

\item $\states = \RR_i\uplus \LL_i\uplus \NN_i$, \label{i:part}
\hfill  ($\RR_i,\LL_i, \NN_i$ (Pre, post and external) partition $\states$)
\item $\to_i \subseteq  \RR^2_j\cup \LL^2_j\cup \NN^2_j$
\hfill ($\to_i$ is invariant over partitions of~$j$\label{i:invar})

\item $\LL_i \lrestr \to_i \rrestr \RR_i  = \emptyset$\label{i:post}
\hfill (post does not locally reach pre)

\item $\to_i \rrestr \RR_i  \rcomm_{\set{j}} \to_j $\label{i:right}
\hfill ($\to_i$ ending in pre right commutes with $\to_j$)

\item $ \LL_i \lrestr \to_i \,\,\lcomm_{\set{i,j}} \to_j $ \label{i:left}
\hfill ($\to_i$ starting from post left commutes with $\to_j$)

\item $\forall \sigma\in L_i\colon\exists \sigma'\in N_i\colon \sigma\to_i^{*}\sigma'$
\label{i:fairness}
\hfill (post phases terminate locally)

\item\label{i:bisimdisjoint}
	$\cong_i\,\subseteq {\LL}_j^2\cup{\RR}_j^2\cup{\NN}_j^2 \vphantom{\overline{\NN_i}^2}$
	\hfill
	($\cong_i$ entails $j$-phase-equality)

\item $\varphi\lrestr (\to_i \rrestr \RR_i) \rrestr \overline\varphi = \emptyset$
\label{i:vispre}
\hfill ($\to_i$ into pre does not disable $\varphi$)
\item $\overline\varphi \lrestr(\LL_i \lrestr \to_i) \rrestr \varphi = \emptyset$
\label{i:vispost}
\hfill ($\to_i$ from post does not enable $\varphi$)
\end{enumerate}

\noindent
Let 
$\trtrans_i \defn \bigcup_{j\neq i} \NN_j \lrestr \to_i$
\hfill
($i$ 
only transits when all $j$ are external).

\noindent
Let
$ \brtrans_i \defn \NN_i\lrestr (\trtrans_i\rrestr \overline{\NN_i})^* \trtrans_i \rrestr \NN_i$
\hfill
(skip internal states).

\noindent
Let  $\brtrans \defn \bigcup_i \brtrans_i$ and  $\NN\defn \bigcup_i \NN_i$.

\noindent
Now, if $\sigma\to^{*} \sigma'$ with $\sigma \in\NN\cap \varphi$ and $\sigma'\in \overline\varphi$, then
$\exists\sigma''\in \overline\varphi$ s.t. $\sigma \,\brtrans^{*} \,\sigma''$. 
\end{theorem}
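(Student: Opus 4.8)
The plan is to prove \autoref{th:vmcai} by the classical Lipton--Lamport transaction-serialization argument, in the ``commutativity up to bisimulation'' form worked out in \cite{vmcai,arxiv}; rather than redo that development I would adapt it. Starting from an arbitrary full-system path $\sigma = \tau_0 \to \tau_1 \to \cdots \to \tau_n = \sigma'$ with $\sigma$ external, $\sigma \in \varphi$, and $\sigma' \in \overline\varphi$, the goal is to rewrite it --- using only local swaps licensed by the mover conditions \autoref{i:right} and \autoref{i:left} --- into a path in which every transaction appears as one uninterrupted block of steps of a single thread, delimited by external configurations. Collapsing the transaction-internal states of such a path yields a path in the reduced system $\brtrans$, so the whole job is to perform the rewriting while keeping the last state in $\overline\varphi$.

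The first step is to reduce to the case in which $\sigma'$ is external ($\sigma'\in\NN_i$ for every $i$). If some thread $i$ is in its post-phase at $\tau_n$, I extend the path by $\to_i$-steps --- which exist by \autoref{i:fairness} --- until thread $i$ reaches $\NN_i$; by \autoref{i:vispost} no post-step can move the state from $\overline\varphi$ back into $\varphi$, so the extended path still ends in $\overline\varphi$, and by \autoref{i:invar} the extension leaves the other threads' phases untouched. If some thread $i$ is in its pre-phase at $\tau_n$, I locate the last $\to_i$-step on the path at which thread $i$ entered its pre-phase; that step must come out of $\NN_i$ (by \autoref{i:post} no transition runs post-to-pre, and by \autoref{i:invar} only $\to_i$-steps change $i$'s phase), whence \autoref{i:vispre} puts the \emph{source} of that step already in $\overline\varphi$, giving a strictly shorter bad path. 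Iterating the pre-phase truncation until no thread is in a pre-phase at the endpoint, and then completing all outstanding post-phases, terminates and produces a bad path whose last state is external.

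With both endpoints external I run the swap loop: while the path is not yet a $\trtrans$-path it contains either a pre-step of some thread $i$ immediately followed by a step of a thread $j\neq i$, or a step of a thread $j\neq i$ immediately followed by a post-step of some thread $i$ --- and \autoref{i:post} ensures that inside a transaction the pre-block precedes the post-block, so these are the only obstructions and are treated independently. In the first case I commute the foreign $\to_j$-step to the left past the pre-step using \autoref{i:right}; in the second I commute it to the right past the post-step using \autoref{i:left}. Each swap reunites two $\to_i$-steps of the same transaction and returns a path agreeing with the old one on its suffix up to a thread bisimulation ($\cong_j$, resp.\ $\cong_{\{i,j\}}$); by \autoref{i:bisimdisjoint} the bisimilar state carries the same per-thread phases, so by \autoref{def:bisim} the remaining steps re-thread through it. A standard well-founded measure of the remaining interleaving strictly decreases at each swap (see \cite{arxiv} for the exact argument), so after finitely many swaps every transaction is contiguous, the path is a $\trtrans$-path, and discarding internal states gives the desired $\brtrans$-path $\sigma \brtrans^* \sigma''$.

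The step I expect to be the real work is carrying the bisimulation through the swap loop: each swap only restores the suffix \emph{up to} $\cong$, so one must accumulate the bisimulation, re-derive the phase labeling of the re-threaded suffix from \autoref{i:bisimdisjoint}, and re-check that the termination measure still decreases even though the concrete states have changed --- precisely the bookkeeping that the formulation in terms of $\cong_Z$ and the proof in \cite{arxiv} are built to support, which is why the cleanest route is to invoke that argument directly. What remains is to see that $\sigma''$ is still in $\overline\varphi$: \autoref{i:vispre} and \autoref{i:vispost} say that within a transaction the path never crosses from $\varphi$ into $\overline\varphi$ on a pre-step, nor from $\overline\varphi$ into $\varphi$ on a post-step, so serialization leaves the crossing into $\overline\varphi$ on a commit step, which is kept in place; equivalently, following \cite{arxiv}, one tags violations with dedicated external error states, which $\cong$ preserves outright, so no separate visibility argument is needed.
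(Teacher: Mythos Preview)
Your overall plan---sketch the Lipton--Lamport serialization and defer the bookkeeping to \cite{arxiv}---is exactly what the paper does: \autoref{th:vmcai} is not proved here but imported from \cite{vmcai,arxiv}, with only a footnote noting that \cite{arxiv} replaces the visibility conditions~\ref{i:vispre} and~\ref{i:vispost} by dedicated external error states $\mathsf{Err}_i\subseteq\NN_i$ and that deriving the present form from that ``is straightforward.'' Your closing paragraph says the same, so at the level of strategy you and the paper coincide.

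That said, your direct pre-phase truncation step does not go through as written. You claim that if thread $i$ is in $\RR_i$ at $\tau_n\in\overline\varphi$ and $\tau_k\to_i\tau_{k+1}$ is the last $i$-entry into $\RR_i$, then condition~\ref{i:vispre} forces $\tau_k\in\overline\varphi$. But \ref{i:vispre} constrains only $i$-steps ending in $\RR_i$; it says nothing about interleaved $j$-steps on the segment $\tau_{k+1},\dots,\tau_n$ (all of which lie in $\RR_i$ by condition~\ref{i:invar}). Concretely, take $\tau_0\in\NN_i\cap\NN_j\cap\varphi$, then an $i$-step $\tau_0\to_i\tau_1\in\RR_i\cap\NN_j\cap\varphi$, then a $j$-step $\tau_1\to_j\tau_2\in\RR_i\cap\NN_j\cap\overline\varphi$: condition~\ref{i:vispre} for $j$ only forbids $\tau_2\in\RR_j$, which it is not, so nothing blocks this crossing, yet the source $\tau_0$ of the last $i$-entry lies in $\varphi$ and your truncation produces no shorter bad path. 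The clean way out is the one you name at the end and the paper's footnote takes: pass to the error-state formulation of \cite{arxiv}, where violating states are external by construction and no pre-phase truncation is needed; conditions~\ref{i:vispre} and~\ref{i:vispost} are precisely what justify that passage.
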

}

We will show that
our transaction system of \autoref{def:trs} satisfies the premise of
\autoref{th:vmcai} (in the following \autoref{lem:th}).
In the process, the most important aspects of the
theorem, i.e. the movability up to bisimulation $\cong_X$ 
in \autoref{i:right} and \autoref{i:left}, is explained.
Notice that the in the right mover case,
we have $X = \set{j}$, while in the left-mover cases we have $X=\set{i,j}$.

To see the challenge ahead, observe that a remote thread $j$ can 
activate a dynamic mover of thread $i$.
We illustrate with an example that this dynamic behavior causes
loss of commutativity in the \emph{transaction} system
(not in the underlying \emph{transition} system),
because of the phase information that the transaction system tracks.
In the following, let $q_x\defn \tuple{\sigma_x,\phase_x}$ and
$q_x' \defn \tuple{\sigma_x',\phase_x'}$ for $x\in \nat$, so that we can
easily track related states in both systems.


Let 
$\tuple{\sigma_1,\alpha,\sigma_2} \in T_\alpha$
(in the \emph{transition} system).
We have $\tuple{\sigma_2,\beta,\sigma_3} \in \rmv_j$, i.e.
$\beta$ is dynamic right moving (in $\sigma_2$) 
and leads to $\sigma_3$.
Because of its movability, the \emph{transaction} system allows that $q_3 \in \RR_i$
(see \autoref{eq:2} of \autoref{def:trs}) as the following figure shows.
The figure shows the right move of $\alpha$ (also a right mover) with respect to $\beta$ (the gray part).
The yields states $q_3'$ and $q_4$ where $\beta$ is executed before $\alpha$.
{
\newcommand\sigmaold{\sigma}
\def\sigma{q}
\begin{center}



\begin{tikzpicture}[baseline={([yshift=-.5ex]current bounding box.center)}]
   \tikzstyle{e}=[minimum width=1cm]
   \tikzstyle{every node}=[font=\small, node distance=.5cm]

  \node (s1) {$\sigma_1$};
  \node (s2) [node distance=1.2cm,below of=s1] {$\sigma_2$};
  \node (s3) [right of=s2, xshift=.5cm] {$\sigma_3$};
  \path (s2.east) -- node[pos=.45]{$\stackrel\beta\to_j$} (s3.west);
  \path (s1.south) -- node[midway,sloped]{$\stackrel\alpha\to_i$} (s2.north);

  \node (s4) [xshift=.9cm,gray,right of=s1] {$\sigma_4$};
  \node (s3p) [gray,node distance=1.2cm,below right of=s4,xshift=-.5cm] {$\sigma_3'$};
  \path (s1.east) -- node[gray,midway,sloped]{$\stackrel\beta\to_j$}
  (s4.west);
  \path (s4.south) -- node[gray,midway,sloped]{$\stackrel\alpha\to_i$} (s3p);

  \node (S1) [left of=s1, xshift=-.2cm,e] {$\RR_j\ni$};
  \node (S2) [below right of=s3,xshift=.3cm, yshift=.2cm,e] {$\in\RR_j$}; 

  \node (S4) [right of=s4, gray,xshift=.27cm,e] {$\in\LL_j$};
  \node (S3) [left of=s3, xshift=-1.2cm,e] {$ \RR_j \ni$}; 

  \path (s3) -- node[gray,sloped,pos=.48]{$\neq$} (s3p);

 \node (S3) [gray,right of=s3p, xshift=.27cm,e] {$\in \LL_j$};  
%
%
%
%
\end{tikzpicture}
\end{center}
}
Because e.g. whatever action $\gamma$ that does not commute with
$\beta$ became unreachable after $\alpha$,
$\beta$ does not right-move from $\sigma_1$ where $\alpha$ is not yet taken.
{
\newcommand\sigmaold{\sigma}
\def\sigma{q}
We see therefore that $\sigma_3'\in\LL_j$.
Additionally, we have  $\sigma_4\in\LL_j$ by \autoref{def:trs}
(see $\forall j\neq i\colon h_j'=h_j$).
Therefore, the moving operation does not commute in
the transaction system as $\sigma_3 \neq \sigma_3'$.

Our theorem accounts for the differing phases of
$\sigma_3$ and $\sigma_3'$.
}
(Apart from the $j$-phases, these states are indeed equivalent, i.e.:
$\sigma_3 = \sigma_3'$ and $\forall k\neq j \colon \phase_3[k] = \phase_3'[k]$.)
To this end, the bisimulations abstract from the phase changes, 
showing that the behavior of the  transaction system mimics that of
the original transition~system.

Bisimulations indeed arise naturally from the introduced phase flags:
All transitions of a thread $i$
in the transaction system are copies from transitions
in the original transition system that end in a state with a
different $i$ phase. Therefore, by discarding the phase information for $i$
we end up with a bisimulation for $i$ (see \autoref{eq:abstract}).
\begin{align}\label{eq:abstract}
\tuple{\sigma,\phase} \cong_i \tuple{\sigma',\phase'} \Longleftrightarrow
\sigma = \sigma' \land \forall j\neq i \colon \phase_j = \phase'_j
\end{align}

\begin{lemma}\label{lem:th}
The transaction system in Def.~\ref{def:trs} fulfills the premise of \autoref{th:reduction}
(\autoref{i:part}--\ref{i:vispost}) with 
$X = \states'$, $N_i = \NN_i$, $R_i = \RR_i$, $L_i = \LL_i$ for all threads $i$,
provided that post-phases terminate, i.e. dm{	\def\sigma{q}
$\forall \sigma \in \LL_i \colon \exists \sigma'\in\NN_i \colon \sigma\hookrightarrow^*_i \sigma'$},
and are actuated as well, i.e.
{	\def\sigma{q}
$\forall \sigma \in \LL_i \colon
\exists \sigma',\sigma''  \colon \sigma' \hookrightarrow_i \sigma'' \hookrightarrow^* \sigma$}.
\end{lemma}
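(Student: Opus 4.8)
The plan is to instantiate the reduction theorem \autoref{th:vmcai}: take the partition $\states' = \RR_i\uplus\LL_i\uplus\NN_i$ given by the $i$-th phase flag (as in \autoref{lem:th}'s statement), the thread bisimulations $\cong_i$ defined in \autoref{eq:abstract}, and then verify hypotheses \ref{i:part}--\ref{i:vispost}. Throughout I would work within the reachable part of $\cts'$, which is all the conclusion of \autoref{th:vmcai} concerns.

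First I would check that each $\cong_i$ is a thread bisimulation. The key observation --- already made before \autoref{eq:abstract} --- is that every transition of $\ttrans_i$ copies a transition $(\sigma,\alpha,\sigma')\in\trans_i$ and alters only the $i$-th phase component (the clause ``$\forall j\neq i\colon h'_j=h_j$'' of \autoref{def:trs}), while the three cases \eqref{eq:1}--\eqref{eq:3} for $h'_i$ are jointly exhaustive; hence from a $\cong_i$-equivalent state the same underlying action is available and lands in the same $\cong_i$-class, and reflexivity/symmetry/transitivity of \autoref{eq:abstract} are immediate. The structural hypotheses then reduce to phase bookkeeping: \ref{i:part} holds since $h_i$ takes exactly one value in $\set{\RR,\LL,\NN}$; \ref{i:invar} and \ref{i:bisimdisjoint} because neither $\ttrans_i$ nor $\cong_i$ disturbs the $j$-phase for $j\neq i$; and \ref{i:post} is exactly property \ref{i:A} ($h_i=\LL\Rightarrow h'_i\neq\RR$, by \eqref{eq:1}). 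The visibility hypotheses drop out of the definition too: \eqref{eq:1} forces $\alpha\notin\actions_\ominus^\varphi$ for every $i$-transition entering $\RR_i$, which is \ref{i:vispre}, and \eqref{eq:2} forces $\en(\sigma')\cap\actions_i\cap\actions_\oplus^\varphi=\emptyset$ for every $i$-transition leaving $\LL_i$, which is \ref{i:vispost}. Finally, since $\hookrightarrow_i\subseteq\trp$, the first proviso of the lemma (post-phases terminate locally) gives \ref{i:fairness}, while the second (``actuated'') proviso contributes the companion requirement --- that each post-phase is entered by a genuine commit step --- that \autoref{th:vmcai}'s proof uses to unwind left movers backwards.

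The heart of the argument is \ref{i:right} and \ref{i:left}, the movability-up-to-bisimilarity conditions, and here I would invoke the monotonicity lemmas \autoref{lem:drm} and \autoref{lem:dlm}. For \ref{i:right}: an $i$-transition $q_1\to_i q_2$ with $q_2\in\RR_i$ is, by \eqref{eq:1}, a copy of $\sigma_1\tr{\alpha}{i}\sigma_2$ with $\rmv_i(\sigma_1,\alpha,\sigma_2)$ and $\alpha\notin\actions_\ominus^\varphi$; given a subsequent $q_2\to_j q_3$ over $\sigma_2\tr{\beta}{j}\sigma_3$, \autoref{lem:drm} supplies $\sigma_1\tr{\beta}{j}\sigma_4$ with $\rmv_i(\sigma_4,\alpha,\sigma_3)$. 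Lifting this diamond back into $\cts'$, the $j$-step $q_1\to_j q_4$ exists (its $j$-phase is immaterial, all other components untouched), and the $i$-step $q_4\to_i q_3'$ may again take $h'_i=\RR$ since all three conjuncts of \eqref{eq:1} still hold ($h_i\neq\LL$ inherited from the step into $q_2$, $\rmv_i(\sigma_4,\alpha,\sigma_3)$ by \autoref{lem:drm}, $\alpha\notin\actions_\ominus^\varphi$ unchanged). Thus $q_3'$ differs from $q_3$ at most in the $j$-phase, i.e.\ $q_3\cong_j q_3'$ --- precisely $\to_i\rrestr\RR_i\rcomm_{\set j}\to_j$, and precisely the picture drawn with $q_3,q_3',q_4$ before the lemma. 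For \ref{i:left} I would first record that every reachable $q\in\LL_i$ carries $\lmv_i(\sigma)$: $h_i=\LL$ is only ever assigned by \eqref{eq:2}, which requires $\lmv_i$ at that state, and \autoref{lem:dlm} shows $\lmv_i$ survives every remote step; then the $\sst^\shortleftarrow$-clause \ref{i:d1} makes the underlying $i$-action of any $\LL_i\lrestr\to_i$ transition left-commute with the remote $j$-step, and lifting that diamond both endpoints agree on $\sigma$ and on the phases of all threads $k\notin\set{i,j}$, giving $\LL_i\lrestr\to_i\,\lcomm_{\set{i,j}}\,\to_j$.

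I expect the phase bookkeeping in this last block to be the main obstacle. The diamonds handed over by \autoref{lem:drm} and \autoref{lem:dlm} live in the transition system and do \emph{not} close in the transaction system, because the order of the $i$- and $j$-steps changes the phase annotations; the work is to re-annotate the commuted target so that it differs from the original only in the phases abstracted by $\cong$. The asymmetry --- $\cong_{\set j}$ is enough for the right-mover clause but $\cong_{\set{i,j}}$ is needed for the left-mover clause --- is exactly the action-versus-state asymmetry between \eqref{eq:1} and \eqref{eq:2}: the $\RR$-flag can be re-derived on the commuted $i$-step, whereas the $\LL$-flag is a property of the (changed) target state, so the $i$-phase must be abstracted there as well.
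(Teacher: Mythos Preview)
Your proposal is correct and follows essentially the same route as the paper: instantiate $\cong_i$ by \autoref{eq:abstract}, check it is a thread bisimulation, then discharge \autoref{i:part}--\autoref{i:vispost} one by one, with the substantive cases \autoref{i:right} and \autoref{i:left} handled via \autoref{lem:drm} and \autoref{lem:dlm} respectively and the phase bookkeeping you describe.

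One small discrepancy worth flagging: the actuation proviso is not a hypothesis of \autoref{th:vmcai}, and the paper does not use it inside that theorem's proof. Instead, the paper invokes actuation \emph{within} the verification of \autoref{i:left} in \autoref{lem:th}: given $q_2\in\LL_i$ and the preceding remote step $q_1\tr{\beta}{j}'q_2$, it first notes $q_1\in\LL_i$ by \autoref{i:invar}, then uses actuation to produce an explicit $i$-step $q\tr{\alpha'}{i}'q'\to'^*q_1$ with no $i$-transitions between $q'$ and $q_1$, reads off $\lmv_i(\sigma')$ from \autoref{eq:2}, and propagates it to $\sigma_1$ by repeated application of \autoref{lem:dlm}. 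Your alternative---arguing by induction on reachability that every reachable $q\in\LL_i$ carries $\lmv_i(\sigma)$---is equivalent and arguably cleaner (it makes the actuation hypothesis redundant once one restricts to reachable states, which you do), but you should be aware that this is where the paper actually spends the assumption.
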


\begin{proof}\label{proof:th}
We take the $\cong_i$ relation from \autoref{eq:abstract} and show that it
is a valid thread bisimulation for each thread $i$.
Then we focus out attention to
the nine items in the premise of \autoref{th:vmcai} and
show how the transaction system fulfills these conditions.
In the following, again let $q_x \defn \tuple{\sigma_x,\phase_x}$ and
$q_x' \defn \tuple{\sigma_x',\phase_x'}$ for $x\in \nat$.

To see that the relation $\cong_i$ from \autoref{eq:abstract} is a correct 
thread bisimulation for thread $i$
according to \autoref{def:bisim}, 
assume that $\tuple{\tuple{\sigma,h}, \tuple{\sigma',h'}}\in\,\, \cong_i$ and
$\tuple{\tuple{\sigma,h},\alpha,\tuple{\sigma'',h''}}\in T'_i$.
By definition, we have
	$\sigma' = \sigma$ and
  $\forall j\neq i \colon h_j = h_j' = h_j''$.
Therefore, by \autoref{def:trs}, we also have
$\tuple{\tuple{\sigma',h'}, \alpha, \tuple{\sigma'', h'''}} \in\ttrans$
for some $h''' \in H$ such that
  $\forall j\neq i \colon h_j = h_j' = h_j''= h_j'''$.
Finally, by definition, $\tuple{\tuple{\sigma'',h''}, \tuple{\sigma''',h'''}} \in \,\,\cong_i$,
concluding the proof that $\cong_i$ is a proper thread bisimulation.

Next, we consider how \autoref{def:trs} satisfies the items of the premise of \autoref{th:vmcai}:
\begin{description}
\item[\phantom{XX}\autoref{i:part}]
By definition of \phases, we have $\forall i\colon \NN_i \uplus \RR_i \uplus \LL_i$.

\item[\phantom{XX}\autoref{i:invar}]
$T_i'$ of the
transaction system ensures that remote phases remain invariant:
$\forall \tuple{\tuple{\sigma,h},\alpha,\tuple{\sigma',h'}}\in T'_i
\colon  \forall   j\neq i \colon h'_j = h_j$.
Therefore, we have:\\ $\forall i\colon  \to'_i \,\,\subseteq\,\,  \NN_i^2 \cup \RR_i^2 \cup \LL_i^2$.

\item[\phantom{XX}\autoref{i:post}]
Follows immediately from \autoref{eq:1} in \autoref{def:trs}.

\item[\phantom{XX}\autoref{i:right}]

Assume that $q_1\stackrel{\alpha}{\to}{\rem\rem}_i'\,\, q_2$ with $q_2 \in \RR_i$ and
			$q_2\stackrel{\beta}{\to}{\rem\rem}_j'\,\, q_3$ .
We show that there exists a path $q_1\tr{\beta}{}{\rem\rem}_j'\,\, q_4\tr{\alpha}{}{\rem\rem}_i'\,\, q_3'$
with $q_3 \cong_j q_3'$, or illustratively:
\begin{center}
\newcommand\sigmaold{\sigma}
\def\sigma{q}


\def\tatrans{\to}

\begin{tikzpicture}[baseline={([yshift=-.5ex]current bounding box.center)}]
   \tikzstyle{e}=[minimum width=1cm]
   \tikzstyle{every node}=[font=\small, node distance=.5cm]

  \node (s1) {$\sigma_1$};
  \node (s2) [node distance=1.2cm,below of=s1] {$\sigma_2$};
  \node (s3) [right of=s2, xshift=.5cm] {$\sigma_3$};
  \path (s2.east) -- node[pos=.45]{$\stackrel\beta\tatrans_j$} (s3.west);
  \path (s1.south) -- node[midway,sloped]{$\stackrel\alpha\tatrans_i$} (s2.north);

  \node (s4) [xshift=.9cm,gray,right of=s1] {$\sigma_4$};
  \node (s3p) [gray,node distance=1.2cm,below right of=s4,xshift=-.5cm] {$\sigma_3'$};
  \path (s1.east) -- node[gray,midway,sloped]{$\stackrel\beta\tatrans_j$}
  (s4.west);
  \path (s4.south) -- node[gray,midway,sloped]{$\stackrel\alpha\tatrans_i$} (s3p);

  \node (S1) [left of=s1, xshift=-.2cm,e] {$\LL_i\not\ni$};
  \node (S2) [below of=s2, xshift=1.7cm,yshift=.3cm,e] {$\in\RR_i$};

  \node (S4) [right of=s4, gray,xshift=.27cm,e] {$\notin\LL_i$};
  \node (S3) [left of=s3, xshift=-1.2cm,e] {$ \RR_i \ni$};

  \path (s3) -- node[gray,sloped,pos=.48]{$\cong_j$} (s3p);

 \node (S3) [gray,right of=s3p, xshift=.27cm,e] {$\in \RR_i$};  
%
%
%
%
\end{tikzpicture}
\end{center}

We have
$M_{\alpha}^\rightarrow(\sigma_1,\alpha,\sigma_2)$ for $\alpha\in\actions_i$
by \autoref{eq:1}, and thus
there is some $B$ such that
$\sst_{\sigma_1}^\rightarrow(B)$, $\alpha \in B$,
$B\cap \en(\sigma_2) \subseteq \actions_i$ and $B\cap\en(\sigma_1) = \set\alpha$ by \autoref{eq:exclude}.
We also have $\sigma_2\tr{\beta}j \sigma_3$ for $j\neq i$ (from $q_2\stackrel{\beta}{\to}{\rem\rem}_j'\,\, q_3$).
As $\beta\in\en(\sigma_2)$, we obtain $\beta\notin B$ by \autoref{eq:exclude} (since $\beta\notin \actions_i$).
Since therefore $\alpha$ right-commutes with $\beta$ by \textbf{\ref{i:d1}},
we obtain $\sigma_1\tr{\beta}j \sigma_4$ and $\sigma_4\tr{\alpha}i \sigma_3$
and according to \autoref{def:trs} also
$q_1\tr{\beta}{}{\rem\rem}_j'\,\, q_4$ and $q_4\tr{\alpha}{}{\rem\rem}_i'\,\, q_3'$
with $q_3' \defn \tuple{\sigma_3, h_3'}$ for some $h_3'$.

Next, we also show that right movability up to $\cong_j$ of~\autoref{i:right} is met,
i.e. $q_3\cong_{j} q_3'$, or $\forall k\neq j \colon h_{3,k} = h_{3,k}'$.
As only transitions $i,j$ are involved, the phases of all other threads
$k\neq i,j$ remain the same according to \autoref{i:invar}.
Furthermore, the transition of $j$ does not influence the phase of $i$ by
\autoref{i:right}, therefore $q_3 \in \RR_i$.
Hence, we only need to show that also $q_3' \in \RR_i$, or $h_{3,i}' = \RR$
(recall that the phase of $j$ may differ according to the definition of $\cong_j$).

According to \autoref{def:trs}, $q_3'\in \RR_i$ 
iff $q_4 \notin \LL_i \land
M^\rightarrow_i(\sigma_4,\alpha,\sigma_3) \land
\alpha\notin \actions_\ominus^\varphi$. 
We show that all three conjuncts hold:
\begin{enumerate}
\item
As \autoref{def:trs} only allows transitions ending in $\RR_i$ when they
start in $\NN_i$ or $\RR_i$, we have $q_1\notin\LL_i$.
Again, following $j$, we also get $q_4\notin\LL_i$.

\item
\autoref{lem:drm} yields $M^\rightarrow_i(\sigma_4,\alpha,\sigma_3)$ as its premise is assumed
above.

\item $\alpha\notin \actions_\ominus^\varphi$ follows from the initial assumption and \autoref{eq:1}.
\end{enumerate}
For the above, we can conclude that $q_3'\in \RR_i$.
This demonstrates that also $q_3 \cong_j q_3'$, completing this proof.

%
%


\item[\phantom{XX}\autoref{i:left}]

%
%
%

Assume that $q_2\stackrel{\alpha}{\to}{\rem\rem}_i'\,\, q_3$ with $q_2 \in \LL_i$ and
			$q_1\stackrel{\beta}{\to}{\rem\rem}_j'\,\, q_2$.
We show that there exists a path $q_1\tr{\alpha}{}{\rem\rem}_i'\,\, q_4\tr{\beta}{}{\rem\rem}_j'\,\, q_3'$
with $q_3 \cong_{\set{i,j}} q_3'$, or illustratively:
\begin{center}
\newcommand\sigmaold{\sigma}
\def\sigma{q}
\text{

\def\tatrans{\to}

\begin{tikzpicture}\centering

   \tikzstyle{e}=[minimum width=1cm]
   \tikzstyle{every node}=[font=\small, node distance=.5cm]

  \node (s1) {$\sigma_1$};
  \node (s2) [right of=s1, xshift=.9cm] {$\sigma_2$};
  \node (s3) [node distance=1.2cm,below right of=s2] {$\sigma_3$};
  \path (s2.south) -- node[midway,sloped]{$\tr{\alpha}i$} (s3.north);
  \path (s1.west) -- node[pos=.45]{$\tr{\beta}j$} (s2.east);

  \node (s4) [node distance=1.2cm,gray,below of=s1] {$\sigma_4$};
  \path (s1.south) -- node[gray,midway,sloped]{$\tr{\alpha}i$}
  (s4.north);
  \node (s3p) [gray,node distance=1.2cm,below of=s2] {$\sigma_3'$};
  \path (s4.west) -- node[gray,pos=.48]{$\tr{\beta}j$} (s3p.east);

  \node (S1) [left of=s1, xshift=-.2cm,e] {$\LL_i \ni$};

  \node (S2) [right of=s2, xshift=.2cm,e] {$\in \LL_i$};
  \path (s3p) -- node[sloped,pos=.48]{$\cong_{i,j}$} (s3);

\end{tikzpicture}}
\end{center}

From \autoref{i:invar}, we obtain $q_1 \in \LL_i$.
From the assumption in the \autoref{lem:th}, we get 
$\exists q,q' \colon q \mathmbox{\tr{\alpha'}{}{\rem\rem}_i'}\,\, q' \to'^* q_1$ for some $\alpha'\in\actions_i$.
Without loss of generality, let $q,q'$ be the first on this path,
i.e., that is no $i$-transition on the path $q' \to'^* q_1$.
By \autoref{eq:2}, we obtain
$M_{i}^\leftarrow(\sigma')$ 
and
$\en(\sigma')\cap\actions_i\cap \actions_\oplus^\varphi=\emptyset$.
As that path from $\sigma'$ to $\sigma_2$  (via $\sigma_1$) merely contains transitions from threads $k\neq i$,
we may apply \autoref{lem:dlm} repeatedly to find that 
$M^\leftarrow_i(\sigma_1)$. 
\autoref{eq:left} implies that there is some $B$ such that
$\sst_{\sigma_1}^\leftarrow(B)$ and $B\cap \en(\sigma_1) = \actions_i \cap \en(\sigma_1)$.

Because $\alpha\in \en(\sigma_2) \cap B$, we must have $\alpha\in \en(\sigma_1)$ by \textbf{\ref{i:d1}}.
As also $\beta\notin B$ and $\sst_{\sigma_1}^\leftarrow(B)$, we obtain that $\alpha \lcomm \beta$ in
$\sigma_1$.
Therefore, there are $\sigma_1\tr{\alpha}j \sigma_4$ and $\sigma_4\tr{\beta}i \sigma_3$
and according to \autoref{def:trs} also
$q_1\tr{\alpha}{}{\rem\rem}_j'\,\, q_4$ and $q_4\tr{\beta}{}{\rem\rem}_i'\,\, q_3'$
with $q_3' \defn \tuple{\sigma_3, h_3'}$ for some $h_3'$.
By \autoref{i:invar}, we have that $\forall k \neq i,j \colon h_{3,k}' = h_{3,k}$.
This yields the desired commutativity up to $\cong_{\set{i,j}}$,
completing this proof.

\item[\phantom{XX}\autoref{i:fairness}]
The assumption in \autoref{lem:th} fulfills this requirement immediately.

\item[\phantom{XX}\autoref{i:bisimdisjoint}]
By definition this follows from \autoref{eq:abstract}.

\item[\phantom{XX}\autoref{i:vispre}]
Follows immediately from \autoref{eq:1} in \autoref{def:trs}.

\item[\phantom{XX}\autoref{i:vispost}]
Follows immediately from \autoref{eq:2} in \autoref{def:trs}.

\end{description}
As this covers all the cases in the premise of \autoref{th:vmcai},
we conclude that the lemma holds.
\qed
\end{proof}


\setcounter{theorem}{2}

\begin{proof}
\autoref{lem:th} asummes termination
$\forall \sigma \in \LL_i \colon \exists \sigma'\in\NN_i \colon \sigma\hookrightarrow^*_i \sigma' $
and actuation
$\forall \sigma \in \LL_i \colon \exists \sigma',\sigma''  \colon \sigma' \hookrightarrow_i \sigma'' \hookrightarrow^* \sigma$
of post phases.
The termination assumption is met by the `provided' assumption of the theorem. 
The actuation is met by the theorem's use
of $\brtrans$, which only considers the subsystem of full transactions starting
with and ending in external states:
$ \brtrans_i \defn \NN_i\lrestr (\trtrans_i\rrestr \overline{\NN_i})^* \trtrans_i \rrestr \NN_i$.
As the premise of \autoref{lem:th} is therefore met,
it follows that we can apply \autoref{th:vmcai}.
Therefore,
if $\sigma\to'^{*} \sigma'$ with $\sigma \in\bigcap_i \NN_i \cap \varphi$ and $\sigma'\in \overline\varphi$, then
$\exists\sigma''\in \overline\varphi$ s.t. $\sigma \,\brtrans^{*} \,\sigma''$. 
As therefore invariant violations are preserved by the reduction, we have
$ \reach(\cts')\not\models \Box\varphi \implies 
\reach(\stackrel\brtrans{\scriptsize \cts}) \not\models \Box\varphi$.
Because $\trtrans_i \subseteq \to_i'$ and $\brtrans_i^* \subseteq \to_i'^*$,
we also have the opposite
$\reach(\stackrel\brtrans{\scriptsize \cts}) \not\models \Box\varphi 
\implies \reach(\cts')\not\models \Box\varphi$, i.e.
as by definition the reduced system 
must contain a subset of the invariant violations in the full system.
Taken together (conjoining the contrapositions),
\autoref{th:reduction} is satisfied.
\qed
\end{proof}


\begin{theorem}\label{th:alg}
\autoref{alg:rtrs} computes $\reach(\brcts)$ s.t.
$\reach(\cts)\models \Box\varphi \Longleftrightarrow
\reach(\brcts)\models \Box\varphi$.
\end{theorem}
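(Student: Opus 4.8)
The plan is to reduce \autoref{th:alg} to the two results already established. By \autoref{lem:preserves} we have $\reach(\cts)\models\Box\varphi \Leftrightarrow \reach(\cts')\models\Box\varphi$; and by \autoref{th:reduction}, \emph{once every post-phase state is known to reach an external state locally}, we have $\reach(\cts')\models\Box\varphi \Leftrightarrow \reach(\brcts)\models\Box\varphi$. Hence it suffices to establish two facts about \autoref{alg:rtrs}: (i) that the \textsc{SCCroot}/Line~\ref{l:ext} mechanism makes the termination premise $\forall\sigma\in\LL_i\colon\exists\sigma'\in\NN_i\colon\sigma\hookrightarrow^{*}_i\sigma'$ hold for the transaction system actually explored, and (ii) that when \textsc{Search} terminates one has $V_1 = \reach(\brcts)$, i.e.\ the algorithm soundly and completely enumerates the RTS.

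For (i) I would argue that any $\LL_i$-state from which no $\NN_i$-state is $\hookrightarrow_i$-reachable must, along $i$-transitions only, eventually enter a bottom SCC of the $i$-local graph that lies entirely inside $\LL_i$; \textsc{SCCroot}$(q,i)$ fires exactly on the canonical root of such an (already fully explored) SCC, and Line~\ref{l:ext} rewrites its $i$-phase to $\NN$. Consequently, in the explored system every $\LL_i$-state reaches an $\NN_i$-state, which is the premise of \autoref{th:reduction}. One must check that this promotion stays inside an instance of $T_i'$ from \autoref{def:trs}: this is precisely why \autoref{eq:2} and \autoref{eq:3} are stated with overlapping conditions --- whenever $h_i'=\LL$ is admissible, $h_i'=\NN$ is an admissible alternative --- so the modified relation is still a legitimate $T_i'$, and \autoref{lem:preserves} continues to apply. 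That Tarjan's algorithm detects these bottom post-SCCs correctly and on-the-fly, from the already visited fragment alone, is standard and taken from~\cite{valmari1}; I would record it as a cited lemma rather than reprove it.

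For (ii) I would maintain a two-level invariant. The outer loop keeps $V_1$ populated only with external states (the assertion at Line~\ref{l:assert}), and for each such state and each thread $i$ it runs \textsc{Transaction}, which follows the transitions of $T_i'$, stays within the states reachable by moving thread $i$ only, and at Line~\ref{l:propagate} feeds back to $Q_1$ exactly those successors whose $i$-phase has become $\NN$ (all phases being external there by \autoref{i:A}). Composing one outer expansion with a maximal inner run thus realizes precisely one step of $\brtrans_i = \NN_i\lrestr(\trtrans_i\rrestr\overline{\NN_i})^{*}\trtrans_i\rrestr\NN_i$, so the states reachable by the algorithm coincide with those reachable from $\tsint$ under $\brtrans=\bigcup_i\brtrans_i$, that is, with $\reach(\brcts)$. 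The delicate point is the subsumption cutoff at Line~\ref{l:sub2}: exploration of an internal state $\tuple{\sigma',h'}$ is abandoned once it is $\sqsubseteq$-dominated by a state already in $V_1\cup V_2\cup Q_1\cup Q_2$. Soundness of the cutoff rests on monotonicity of the order $\RR\sqsubset\LL\sqsubset\NN$ along transitions: a $\sqsupseteq$-larger phase enables a superset of the transitions the search cares about (an $\NN$-phase releases all threads), so every state --- in particular every external state --- reachable from the dominated state is already reachable from its dominator. Hence no external state disappears from $V_1$, and, since internal states are not members of $\reach(\brcts)$ anyway, $V_1=\reach(\brcts)$ on termination. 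Termination of the algorithm itself is immediate from finiteness of $\tstates$ and the monotone growth of $V_1$ and $V_2$.

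The main obstacle is part (ii): making the informal identity ``one maximal inner run $=$ one $\brtrans_i$-step'' precise in the presence of the two optimizations --- the early \textsc{SCCroot} promotion at Line~\ref{l:ext} and the subsumption cutoff at Line~\ref{l:sub2} --- and in particular verifying that their interaction never drops a genuinely new external state from $Q_1$. Once this bookkeeping is done, \autoref{th:alg} follows by chaining \autoref{lem:preserves} with \autoref{th:reduction}.
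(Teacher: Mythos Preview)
Your proposal follows essentially the same route as the paper: chain \autoref{lem:preserves} with \autoref{th:reduction}, and discharge the termination premise of \autoref{th:reduction} by appealing to the \textsc{SCCroot}/Line~\ref{l:ext} mechanism and Valmari's on-the-fly SCC technique. The paper's own proof is in fact considerably terser than yours --- it simply cites Valmari for the termination condition and then composes the two results, without spelling out your part~(ii) at all (the claim $V_1=\reach(\brcts)$ is left as an assertion in the algorithm and in the surrounding text, not argued in the proof). So your additional invariant analysis of the two-level search and the subsumption cutoff goes beyond what the paper provides, but it is consistent with the paper's intent and fills a gap the paper leaves open rather than taking a different route.
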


\begin{proof}
The algorithm uses the approach described by Valmari~\cite{valmari1} to
ensure that 
$\forall \sigma \in \LL_i \colon \exists \sigma'\in\NN_i \colon \sigma\hookrightarrow^*_i \sigma'$.
It follows therefore that the premise of \autoref{lem:th} holds (including the ``provided that'' part).
From \autoref{lem:preserves}, we also have
$\reach(\cts)\models \Box\varphi \Longleftrightarrow
\reach(\cts')\models \Box\varphi$,
hence:
$\reach(\cts)\models \Box\varphi \Longleftrightarrow
\reach(\brcts)\models \Box\varphi$.
\qed
\end{proof}

\begin{theorem}\label{th:removal}
Let $N\defn \cap_{i} \NN_i$.
We have $\sizeof{N} = \sizeof{S}$ and
$\reach(\brcts)\subseteq \reach(\cts)$.
\end{theorem}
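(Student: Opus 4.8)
The plan is to treat the two conjuncts separately. The cardinality claim $\sizeof{N}=\sizeof{\states}$ is immediate by unfolding: since $\NN_i$ is the set of transaction states whose $i$-th phase component is $\NN$, the intersection $N=\bigcap_i\NN_i$ equals $\set{\tuple{\sigma,\NN^{\procs}}\mid\sigma\in\states}$, and $\sigma\mapsto\tuple{\sigma,\NN^{\procs}}$ is a bijection onto it. That same bijection is the identification under which the second claim must be read, because $\brcts$ nominally lives over $\tstates=\states\times\phases$ whereas $\reach(\cts)\subseteq\states$; so the actual content of $\reach(\brcts)\subseteq\reach(\cts)$ splits into (i) the reachable part of $\brcts$ stays inside $N$, and (ii) the $\states$-component of every reachable transaction state is reachable in $\cts$.

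For (i) I would induct on the length of the $\brtrans$-path witnessing membership in $\reach(\brcts)$ (recall $\brtrans=\bigcup_i\brtrans_i$). The base point $\tsint=\tuple{\sint,\NN^{\procs}}$ lies in $N$. For the step, take $q\in N$ with $q\brtrans_i q'$ and unfold $\brtrans_i\defn\NN_i\lrestr(\trtrans_i\rrestr\overline{\NN_i})^*\,\trtrans_i\rrestr\NN_i$: the trailing $\rrestr\NN_i$ forces the $i$-phase of $q'$ to be $\NN$, and every step along the witnessing path is a $\trtrans_i$-step, hence a $\trp$-step, hence stems from a transition of $\ttrans_i$, which by \autoref{def:trs} leaves the phase of every thread $j\neq i$ untouched; since those phases are all $\NN$ in $q$, they remain $\NN$, so $q'\in N$. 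For (ii) I would run the same induction: $\sint\in\reach(\cts)$, and a $\brtrans_i$-step unfolds, exactly as above, into a nonempty composition of $\trp$-steps, each of which projects on the $\states$-component to a transition of $\trans_i\subseteq\trans$ — again directly from the shape of $\ttrans_i$ in \autoref{def:trs}. Hence the $\states$-components of $q$ and $q'$ are related by $\rightarrow^{*}$, so reachability in $\cts$ propagates along $\brtrans$. Combining (i), (ii) and the bijection yields $\reach(\brcts)\subseteq\reach(\cts)$.

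I expect no genuine difficulty here: unlike the other reduction results, this statement uses neither the monotonicity lemmas nor \autoref{th:vmcai}; it is essentially the observation that the phase machinery is inert on the $\states$-component and disappears on the reachable part. The one thing to be careful about is the bookkeeping around the implicit identification of $N$ with $\states$ (so that the inclusion even typechecks) and the precise unfolding of $\brtrans_i$, so that both ``stays in $N$'' and ``projects to a genuine run of $\cts$'' drop out cleanly from \autoref{def:trs}.
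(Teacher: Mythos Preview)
Your proposal is correct and follows essentially the same approach as the paper's own proof, which is a one-line appeal to the definitions of $\tsint$, $\trtrans$, $\brtrans$ and ``basic induction''; you have simply spelled out that induction carefully, including the bookkeeping around the $N\cong\states$ identification, whereas the paper leaves all of this implicit.
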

\begin{proof}
By definition of the initial state $\sigma_0'$ in \autoref{def:trs}, the reduced transition relations
$\trtrans$, $\brtrans$ in \autoref{th:reduction} and basic induction.
\qed
\end{proof}

\end{document}